\let\latextextsuperscript\textsuperscript
\definecolor{color1}{RGB}{230,57,70}
\definecolor{color2}{RGB}{29,53,87}
\definecolor{color3}{RGB}{69,123,157}
\tikzset{
>=Latex,shorten >=2pt,shorten <=2pt,
node distance=3cm,
every state/.style={fill=gray!20},
initial text={\normalfont start}, 
}
\crefname{lemma}{lemma}{lemmas}
\crefname{proposition}{proposition}{propositions}
\crefname{definition}{definition}{definitions}
\crefname{theorem}{theorem}{theorems}
\crefname{conjecture}{conjecture}{conjectures}
\crefname{corollary}{corollary}{corollaries}
\crefname{example}{example}{examples}
\crefname{section}{section}{sections}
\crefname{appendix}{appendix}{appendices}
\crefname{figure}{fig.}{figs.}
\crefname{equation}{eq.}{eqs.}
\crefname{table}{table}{tables}
\crefname{item}{property}{properties}
\crefname{remark}{remark}{remarks}
\crefname{problem}{}{}
\newtheorem{theorem}{Theorem}
\newtheorem{definition}[theorem]{Definition}
\newtheorem{corollary}[theorem]{Corollary}
\newtheorem{lemma}[theorem]{Lemma}
\newcommand\prob\textsc
\newcommand{\problemtitle}[1]{\gdef\@problemtitle{#1}}
\newcommand{\probleminput}[1]{\gdef\@probleminput{#1}}
\newcommand{\problemquestion}[1]{\gdef\@problemquestion{#1}}
\newcommand{\problempromise}[1]{\gdef\@problempromise{#1}}
  \par\addvspace{.5\baselineskip}
  \par\addvspace{.5\baselineskip}
\newcommand{\C}{{\mathbb{C}}} 
\newcommand{\Exp}[2][]{\mathbb{E}{%
	\ifthenelse{\isempty{#1}}{}{_{#1}}}%
	\left[{#2}\right]} 
\renewcommand{\O}{\mathcal{O}}
\newcommand{\Abs}[1]{\left|{#1}\right|}
\newcommand{\cc}[1]{\mathsf{{#1}}}	
\DeclareMathOperator{\BigO}{\mathrm O}
\newcommand\field\mathds
\DeclareMathOperator{\PowerLaw}{Power}
\DeclareMathOperator{\argmax}{argmax}
\newcommand{\dd}{\mathrm d}
\renewcommand{\1}{\mathbbm 1}
\newcommand\op\mathbf
\DeclareMathOperator{\diag}{diag}
\newcommand\ee{\mathrm e}
\newcommand\ii{\mathbbm i}
\DeclareMathOperator{\supp}{supp}
\DeclareMathOperator{\Var}{Var}
\title{A Quantum Search Decoder\\for Natural Language Processing}
\date{June 2020}
\author{%
  Johannes Bausch\\
  CQIF, DAMTP\\
  University of Cambridge\\
  Cambridge CB3 0WA \\
  \texttt{jkrb2@cam.ac.uk} \\
  \and
  Sathyawageeswar Subramanian \\
  CQIF, DAMTP\\
  University of Cambridge\\
  Cambridge CB3 0WA \\
  \texttt{ss2310@cam.ac.uk} \\
  \and
  Stephen Piddock \\
  School of Mathematics,
  University of Bristol,
  Bristol BS8 1TW\\
  Heilbronn Institute for Mathematical Research, Bristol\\
  \texttt{stephen.piddock@bristol.ac.uk}
}
\begin{document}
\maketitle
\thispagestyle{empty}
\enlargethispage{2cm}

\begin{abstract}
Probabilistic language models, e.g.\ those based on an LSTM, often face the problem of finding a high probability prediction from a sequence of random variables over a set of tokens. This is commonly addressed using a form of greedy decoding such as beam search, where a limited number of highest-likelihood paths (the beam width) of the decoder are kept, and at the end the maximum-likelihood path is chosen.
  
  In this work, we construct a quantum algorithm to find the globally optimal parse (i.e.\ for infinite beam width) with high constant success probability.
  When the input to the decoder is distributed as a power-law with exponent $k>0$, our algorithm has runtime $R^{n f(R,k)}$, where $R$ is the alphabet size, $n$ the input length; here $f<1/2$, and $f\rightarrow 0$ exponentially fast with increasing $k$, hence making our algorithm always more than quadratically faster than its classical counterpart.
  
  We further modify our procedure to recover a finite beam width variant, which enables an even stronger empirical speedup while still retaining higher accuracy than possible classically.
  Finally, we apply this quantum beam search decoder to Mozilla's implementation of Baidu's \emph{DeepSpeech} neural net, which we show to exhibit such a power law word rank frequency.
\end{abstract}

\section{Background and Context}
A recurring task in the context of parsing and neural sequence to sequence models---such as machine translation \cite{Sutskever:2011:GTR:3104482.3104610,NIPS2014_5346}, natural language processing \cite{Schmidhuber2014} and generative models \cite{Graves2013}---is to find an optimal path of tokens (e.g.\ words or letters) from a sequential list of probability distributions.
Such a distribution can for instance be produced at the output layer of a recurrent neural network, e.g.\ a long short-term memory (LSTM).
The goal is to decode these distributions by scoring all viable output sequences (paths) under some language model, and finding the path with the highest score.

Nowadays, the de-facto standard solution is to use a variant of beam search \cite{steinbiss1994improvements,Vijayakumar2016a,Wiseman2016,Kulikov2018,FBwav2letter} to traverse the list of all possible output strings. Beam search stores and explores a constant sized list of possible decoded hypotheses at each step, compared to a greedy algorithm that only considers the top element at each step.
Beam search thus interpolates between a simple greedy algorithm, and best-first search; but just like greedy search, beam search is not guaranteed to find a global optimum. Furthermore, beam search suffers from sensitivity to the predicted sequence length. Improving the algorithm itself \cite{Murray2018CorrectingLB,Yang2018}, as well as finding new decoding strategies \cite{Fan2018,Holtzman2019}, is an ongoing field of research.

A related task is found in transition based parsing of formal languages, such as context-free grammars \cite{Hopcroft2001,zhang2008tale,zhang2011transition,Zhu2015a,Dyer2015}.
In this model, an input string is processed token by token, and a heuristic prediction (which can be based on various types of classifiers, such as feed forward networks) is made on how to apply a transition at any one point.
As in generative models and decoding tasks, heuristic parsing employs beam search, where a constant sized list of possible parse trees is retained in memory at any point in time, and at the end the hypothesis optimising a suitable objective function is chosen. Improvements of beam search-based parsing strategies are an active field of research \cite{buckman2016transition,Bohnet2016,vilares2018transition}.

In essence, the problem of decoding a probabilistic sequence with a language model---or probabilistically parsing a formal grammar---becomes one of searching for paths in an exponentially-growing tree: since at each step or node the list of possible sequence hypotheses branches, with maximum degree equal to the number of predictions for the next tokens.
The goal is to find a path through this search space with the highest overall score.
Due to runtime and memory constraints, a tradeoff has to be made which limits any guarantees on the performance of the search strategy.

Quantum computing has shown promise as an emerging technology to efficiently solve some instances of difficult computing tasks in fields ranging from optimisation \cite{Gilyen2019,Montanaro2020}, linear algebra \cite{Harrow2009QuantumEquations,Berry2017}, number theory and pattern matching \cite{Montanaro2016,Montanaro2017}, language processing \cite{Aaronson2018,Wiebe2019}, machine learning \cite{McClean_2016,Bausch2018,Wang2019,Li2019SublinearQA}, to quantum simulation \cite{Lloyd1996,Babbush2018,Childs2019}.
While quantum computers are not yet robust enough to evaluate any of these applications on sample sizes large enough to claim an empirical advantage, a structured search problem such as language decoding is a prime candidate for a quantum speedup.

Although most na\"ive search problems can be sped up using Grover's search algorithm (or one of its variants, such as fixed point search or oblivious amplitude amplification), finding good applications for quantum algorithms remains challenging, and super-quadratic (i.e.\ \emph{faster than Grover}) speedups---such as Shor's for prime factorisation \cite{Shor1999}---are rare.
Recently, several exponentially-faster algorithms (such as quantum recommender systems \cite{Kerenedis2016recommender}, or dense low rank linear algebra \cite{Wossnig2018denseHHL}) have been proven to rely on a quantum random access memory model which, if classically available, can yield an exponential speedup without the need for quantum computing \cite{Tang2019}.

In this work, we develop a quantum search decoder for parsing probabilistic token sequences with a super-quadratic speedup as compared to its classical counterpart.
The algorithm can be seen as a generalisation of classical beam search, with potentially infinite beam width;
for finite beam width, the list of hypotheses is pruned only once at the very end---after all possible parsing hypotheses have been generated---instead of performing continuous pruning during decoding, resulting in higher accuracy guarantees.

We develop two variants of the decoder. The first one is for finding the most likely parsed string.
The more realistic use case is where the input sequence simply serves as \emph{advice} on where to find the top scoring parse under a secondary metric---i.e.\ where the element with the highest decoder score is \emph{not necessarily} the one with the highest probability of occurring when sampled.
In this variant the speedup becomes more pronounced the better the advice (see \cref{fig:grover-speedups}).

Our novel algorithmic contribution is to analyse a recently-developed quantum maximum finding algorithm \cite{VanApeldoorn2017} and its expected runtime when provided with a biased quantum sampler that we developed for formal grammars, under the premise that at each step the input tokens follow a power-law distribution;
for a probabilistic sequence obtained from Mozilla's \emph{DeepSpeech} (which we show satisfies the premise), the quantum search decoder is a power of $\approx 4-5$ faster than possible classically (\cref{fig:deepspeech-2}).

In the following we assume basic familiarity with the notion of quantum computation, but provide an overview for the reader in the supplementary material, Sec.~1.

\section{Main Results}
In this paper, we address the question of decoding a probabilistic sequence of words, letters, or generally tokens, obtained e.g.\ from the final softmax layer of a recurrent neural network, or given as a probabilistic list of heuristic parse transitions.
These models are essentially identical from a computational perspective. Hence, we give the following formal setup, and will speak of a decoding task, leaving implicit the two closely-related applications.

Given an alphabet $\Sigma$, we expect as input a sequence of random variables $X=(X_1,X_2,\ldots, X_n)$, each distributed as $X_i\sim \mathcal D_i^\Sigma$.
The distributions $\mathcal D_i^\Sigma$ can in principle vary for each $i$; furthermore, the $X_i$ can either be independent, or include correlations.
The input model is such that we are given this list of distributions explicitly, e.g.\ as a table of floating point numbers; for simplicity of notation we will continue to write $X_i$ for such a table.
The decoding machine $M$ is assumed to ingest the input one symbol at a time, and branch according to some factor $R$ at every step; for simplicity we will assume that $R$ is constant (e.g.\ an upper bound to the branching ratio at every step).
As noted, $M$ can for instance be a parser for a formal grammar (such as an Earley parser \cite{Earley1970}) or some other type of language model; it can either accept good input strings, or reject others that cannot be parsed. The set of configurations of $M$ that lead up to an accepted state is denoted by $\Omega$;
we assume that everything that is rejected is mapped by the decoder to some type of sink state $\omega\neq\Omega$.

While we can allow $M$ to make use of a heuristic that attempts to guess good candidates for the next decoding step, it is not difficult to see that a randomised input setting is more generic: we thus restrict our discussion to a decoder $M$ that processes a token sequence step by step, and such that
its state itself now simply becomes a sequence $(M_i)_{i\leq n}$ of random variables.
Described as a stochastic process, the $M_i$ are random variables over the set $\Omega$ of internal configurations after the automaton has ingested $X_i$, given that it has ingested $X_{i-1},\ldots, X_1$ prior to that, with a distribution $\mathcal D_i^\Omega$.
The probability of decoding a specific accepted string $x=(x_1,\ldots,x_n)$ is then given by the product of the conditional probabilities
\begin{align}\label{eq:product-of-probs}
\Pr(M_n=x):=&\ \mathcal N\Pr(X=x) \\
=&\ \frac{1}{\mathcal N}\prod_{i=1}^n\Pr(X_i=x_i|X_{j}=x_{j}, j\le i-1) \nonumber
\end{align}
where $\mathcal{N}= \sum_{x \in \Omega} \Pr(X=x)$.
In slight abuse of notation we write $M_n=x$ when we mean $M_n=y(x)$, where $y(x)$ is the configuration of the parser $M$ that was provided with some input to produce the parsed string $x$ (which is unambiguous as there is a one-to-one mapping between accepted strings and parser configurations $y(x)$). Similarly, we write $x\in\Omega$ for an accepted string/decoded path.

The obvious question is: which final accepted string of the decoder is the most likely?
This is captured in the following computational problem.
\begin{problem}[Most Likely Parse][MLP]\label{prob:mostLikelyParse}
\probleminput{
    Decoder $M$ over alphabet $\Sigma$, set of accepting configurations $\Omega$. Sequence of random variables $(X_i)_{i\le n}$ over sample space $\Sigma$.
}
\problemquestion{
    Find $\sigma = \argmax_{x\in\Omega} \Pr(M_n = x)$.
}
\end{problem}

Classically, it is clear that if we have a procedure that can sample the random variable $M_n$ efficiently, then we can find the most likely element with an expected runtime of $1/\Pr(M_n = \sigma)$, as this is the number of samples we are expected to draw to see the element once.
While such sampling algorithms might be inefficient to construct in general, we emphasize that the question of drawing samples from strings over a formal language is an active field of research, and algorithms to sample \emph{uniformly} are readily available for a large class of grammars: in linear time for regular languages \cite{Bernardi2012,Oudinet2013}, but also context-free grammars/restrictions thereof \cite{McKenzie97generatingstrings,Goldwurm2001,Hickey1983,Gore1997,Denise1996}, potentially with global word bias \cite{Reinharz2013,Lorenz2013,Denise2000,Ponty2012}.

In \cref{th:sampler,sec:sampler}, we lift such a classical uniform sampler to a quantum sampler (denoted $\op U_\mu$) with \emph{local} (instead of global) word bias, which we can use to obtain a quantum advantage when answering \prob{Most Likely Parse}.
We note that the techniques used to prove \cref{th:sampler} may well be used to obtain a (potentially faster) classical Monte Carlo procedure to sample from $M_n$.
In what follows, we will therefore keep the decoder's time complexity separate from the sampler's runtime and simply speak of the decoder's query complexity to $\op U_\mu$.

We prove the following result:
\begin{theorem}\label{th:1}
For an input sequence of $n$ random variables to a parser with sampling subroutine $\op U_\mu$, there exists a quantum search algorithm answering \prob{Most Likely Parse} with certainty, using $\pi/4\sqrt{\Pr(M_n=\sigma)}$ queries to $\op U_\mu$.
\end{theorem}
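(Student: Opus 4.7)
The plan is to reduce the problem to exact amplitude amplification on the state prepared by the biased quantum sampler. By \cref{th:sampler}, $\op U_\mu$ prepares (up to ancillary garbage that factorises and may be ignored for this sketch) the state
\[
  \ket{\psi} \;=\; \sum_{x\in\Omega}\sqrt{\Pr(M_n=x)}\,\ket{x}.
\]
The key observation is that $\ket{\sigma}$ appears in this superposition with amplitude exactly $\sqrt{\Pr(M_n=\sigma)}$; this is precisely the overlap that amplitude amplification rotates towards $1$.

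First I would construct a phase-flip oracle $O_\sigma$ marking the maximum-probability basis state(s). Since the table of conditional distributions is classical input, for any basis string $x$ we can coherently evaluate $\Pr(M_n=x)=\prod_i \Pr(X_i=x_i\mid X_{<i}=x_{<i})$ into an ancilla register, compare it with the threshold $p^{\star}:=\Pr(M_n=\sigma)$ via a reversible arithmetic circuit, flip a global phase conditional on equality, and then uncompute. This realises the unitary $\ket{x}\mapsto (-1)^{[\Pr(M_n=x)\ge p^\star]}\ket{x}$, without any further calls to $\op U_\mu$.

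Second, I would feed $\op U_\mu$ and $O_\sigma$ into the exact amplitude amplification scheme of Brassard--H\o yer--Mosca--Tapp, which attains unit success probability by tuning the phase of the final reflection. Writing $\theta=\arcsin\sqrt{\Pr(M_n=\sigma)}$, the procedure requires $\lceil \pi/(4\theta)\rceil\le \pi/(4\sqrt{\Pr(M_n=\sigma)})$ Grover iterations, each consisting of one call to $\op U_\mu$, one to $\op U_\mu^\dagger$, and one application of $O_\sigma$. A computational-basis measurement at the end then yields $\sigma$ with certainty, matching the stated query bound.

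The main obstacle is that the marking oracle presupposes knowledge of the unknown optimum $p^{\star}$. The cleanest resolution is to invoke the quantum maximum-finding routine of \cite{VanApeldoorn2017} on top of the sampler: internally it performs an exponential/bisection search over probability thresholds while preserving the $O(1/\sqrt{p^{\star}})$ query scaling, and the clean constant $\pi/4$ in the theorem emerges from the exact Grover rotation angle in the idealised regime where $p^{\star}$ is treated as advice. Alternatively, one can run exact amplitude amplification against geometric guesses $p=1,1/2,1/4,\ldots$ and stop as soon as the returned $x$ achieves a strictly larger probability than any previous candidate; the geometric series of iteration counts sums to the same leading-order bound, preserving the statement up to a constant factor.
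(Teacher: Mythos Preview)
Your proposal is correct and lands on essentially the same argument as the paper: the paper's \textsc{QuantumSearchDecode} (\cref{alg:main}) directly invokes the maximum-finding routine of \cite{VanApeldoorn2017} on $\ket\mu$ with the scoring function $F(q)=p_q$, and \cref{thm:maximumfinding} then gives the $\BigO(1/\Abs{\braket\sigma\mu})=\BigO(1/\sqrt{\Pr(M_n=\sigma)})$ query bound---i.e.\ precisely the resolution you identify in your final paragraph. The only cosmetic difference is that the paper does not recompute $\Pr(M_n=x)$ from the classical tables to build the marking oracle; instead $\op U_\mu$ already writes $p_q$ into an explicit register of $\ket\mu$ (see \cref{eq:mu}), so the comparator in \cref{alg:main} simply reads that register, which makes the reduction to maximum finding a one-liner.
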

As explained, this theorem formalises the expected quadratic speedup of the runtime as compared to a classical algorithm based on sampling from $M_n$.
Given the input to the parser is power-law distributed (see \cref{def:powerlaw}), this allows us to formulate the following corollary.
\begin{corollary}
\label{cor:2}
If the $X_i\sim\PowerLaw_R(k)$, answering \prob{Most Likely Parse} requires at most $1/H_R(k)^{n/2}$ queries; where $H_R(k)=\sum_{i=1}^R i^{-k}$.
\end{corollary}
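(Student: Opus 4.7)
The approach is direct: invoke \cref{th:1} and lower-bound the max-parse probability using the power-law structure of the input marginals. Since \cref{th:1} yields a query complexity of $\tfrac{\pi}{4}\,\Pr(M_n=\sigma)^{-1/2}$, the task reduces to showing $\Pr(M_n=\sigma)\ge H_R(k)^{-n}$, after which taking the square root produces the bound claimed in the corollary (up to the harmless $\pi/4$ prefactor that is absorbed into the asymptotic statement).

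First I would unfold \cref{eq:product-of-probs} and write $\Pr(M_n=\sigma)=\mathcal N^{-1}\prod_{i=1}^n\Pr(X_i=\sigma_i\mid X_{<i}=\sigma_{<i})$. Because $\Omega\subseteq\Sigma^n$, the normaliser satisfies $\mathcal N\le 1$, so dividing by $\mathcal N$ can only inflate the probability; it is therefore enough to lower-bound the unnormalised product by $H_R(k)^{-n}$.

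Second, I would exploit the power-law assumption directly. Under $\PowerLaw_R(k)$ the marginal probabilities are $p_j=j^{-k}/H_R(k)$, attaining their maximum $1/H_R(k)$ at $j=1$. Fixing the per-token mode at every position produces a candidate string $g\in\Sigma^n$ whose unconditional probability is $\prod_{i=1}^n p_1=H_R(k)^{-n}$. Provided $g\in\Omega$, the maximiser $\sigma$ dominates $g$ and one obtains $\Pr(M_n=\sigma)\ge H_R(k)^{-n}$, which upon substitution into \cref{th:1} yields the advertised query count.

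The main obstacle is the edge case in which the unconstrained mode parse $g$ is rejected by the decoder: a combinatorially adversarial $\Omega$ could in principle exclude exactly the greedy string. The natural remedy is to build $g$ inductively, at each step picking the highest-probability token consistent with the transition structure of $M$, so that the local factor stays $\ge 1/H_R(k)$ by construction; this is precisely the information encoded in the biased sampler $\op U_\mu$ of \cref{th:sampler}, and so the $H_R(k)^{-n}$ lower bound survives intact. Once this bookkeeping is settled, the remainder of the proof is a single substitution.
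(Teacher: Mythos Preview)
Your core argument---compute $\Pr(M_n=\sigma)=H_R(k)^{-n}$ and substitute into \cref{th:1}---is exactly what the paper does. The paper's proof is two lines: under the product-of-power-laws model of \cref{eq:cartesian-product-of-powerlaws}, the maximum-mass point sits at $r_1=\cdots=r_n=1$, so $\Abs{\braket{x}{\mu}}^2=H_R(k)^{-n}$, and the bound follows.

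Your final paragraph, however, is both unnecessary and incorrect as written. It is unnecessary because the paper explicitly adopts the simplifying assumption (spelled out in \cref{sec:powerlaw-input}) that the decoder branches into exactly $R$ choices at every step with a resulting product-of-power-laws distribution over the paths; under this model the all-modes string lies in $\Omega$ by construction. It is incorrect because your proposed remedy does not actually work: if at some step the rank-$1$ token is disallowed by the grammar, the best \emph{admissible} token has unnormalised mass at most $2^{-k}/H_R(k)<1/H_R(k)$, so the local factor is \emph{not} bounded below by $1/H_R(k)$ and the inductive construction fails to deliver the claimed lower bound. The normaliser $\mathcal N^{-1}$ might in principle compensate, but you have already discarded it via $\mathcal N\le 1$. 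Drop the last paragraph and your argument coincides with the paper's.
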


Yet {\em a priori}, it is not clear that the weight of a decoded path (e.g.\ the product of probabilities of the input tokens) also corresponds to the highest score we wish to assign to such a path.
This becomes obvious in the setting of a heuristic applied to a live translation:
while at every point in time the heuristic might be able to guess a good forward transition, it might well be that long range correlations strongly affect the likelihood of prior choices.
Research addressing these long-distance ``collocations''
indicates that LSTM models are capable of using about 200 tokens of context on average, but that they sharply distinguish nearby context ($\approx 50$ tokens) from the distant past. Furthermore, such models appear to be very sensitive to word order within the most recent context, but ignore word order in the long-range context (more than $50$ tokens away) \cite{Zhu2015a,Dbrowska2008longDistanceDependencies,khandelwal2018LongDistanceLSTM}.
Similarly, transformer-type architectures with self-attention---while outperforming LSTMs---feature a fixed-width context window; extensions thereof are an active field of research \cite{Al-Rfou2019,Dai2019,kitaev2020reformer}.

To address this setting formally, we assume there exists a scoring function $F:\Omega\longrightarrow\field R$, which assigns scores to all possible decoded paths. Without loss of generality, there will be one optimal string which we denote with $\tau = \argmax_{x\in\Omega}F(x)$.
Furthermore, we order all decoded strings $\Omega$ in some fashion, and index them with numbers $i=1,\ldots,|\Omega|$.
Within this ordering, $\tau$ can now be in different places---either because the heuristic guesses differently at each step, or because the input sequence varied a little.
We denote the probability that the marked element $\tau$ is at position $i$ with $p_i$. In essence, the position where $\tau$ is found is now a random variable itself, with probability mass $\Pr(\text{finding }\tau\text{ at index $i$})=p_i$.

For the decoder probabilities $\Pr(M_n=x)$ to serve as \emph{good advice} on where to find the highest-score element under the metric $F$, we demand that the final distribution over the states of the decoder puts high mass where the highest-scoring element often occurs; or formally that
\begin{equation}\label{eq:independent-scoring}
    \Pr( M_n = \text{string with index $i$} ) = p_i.
\end{equation}

To be precise, we define the following problem.
\begin{problem}[Highest Score Parse][HSP]
\probleminput{
    Decoder $M$ over alphabet $\Sigma$ and with state space  $\Omega$. Sequence of random variables $(X_i)_{i\le n}$ over sample space $\Sigma$. Scoring function $F:\Omega\longrightarrow\field R$.
}
\problempromise{\Cref{eq:independent-scoring}.}
\problemquestion{
    Find $\tau = \argmax_{x\in\Omega}F(x)$.
}
\end{problem}

What is the classical baseline for this problem?
As mentioned in \cite{Montanaro2011}, if $p_x$ is the probability that $x$ is the highest-scoring string, then in expectation one has to obtain $1/p_x$ samples to see $x$ at least once. Any procedure based on sampling from the underlying distribution $p_x$ thus has expected runtime $\sum_{x\in\Omega}\frac{1}{p_x}\times p_x = |\Omega|$.
In a sense this is as bad as possible; the advice gives zero gain over iterating the list item by item and finding the maximum in an unstructured fashion.
Yet provided with the same type of advice, a quantum computer can exhibit tremendous gains over unstructured search.

\begin{theorem}\label{th:2}
With the same setup as in \cref{th:1} but under the promise that the input tokens are iid with $X_i\sim \PowerLaw_{|\Sigma|}(k)$ over alphabet $\Sigma$ (\cref{def:powerlaw}), that the decoder has a branching ratio $R\le|\Sigma|$, and that we can uniformly sample from the grammar to be decoded,
there exists a quantum algorithm \textsc{QuantumSearchDecode} (\cref{alg:main}) answering \prob{Highest Score Parse} with an expected number of iterations
\begin{align*}
    \mathrm{RT}_1(R,k,n) &= \BigO\left( R^{nf(R,k)}\right),\\
    \text{where}\quad f(R,k) &= \log\left( \frac{H_R(k/2)}{H_R(k)^{1/2}} \right) \Big/ \log R,
\end{align*}
and where $H_R(k)$ is defined in \cref{cor:2}.

\begin{figure}
    \centering
    \includegraphics[width=0.9\columnwidth]{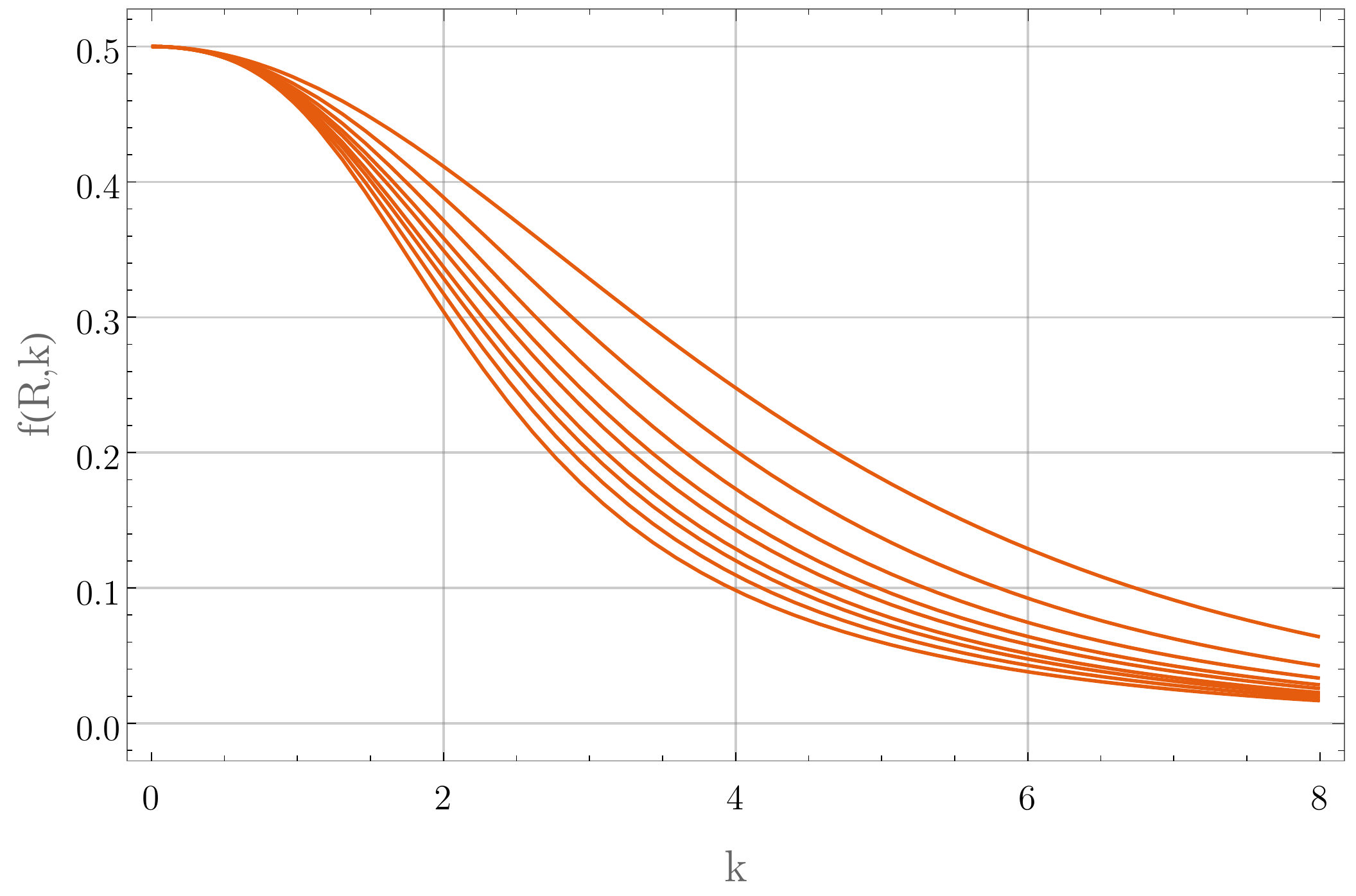}
    \caption{Exponent $f(R,k)$ of expected runtime of \textsc{QuantumSearchDecode}, when fed with a power law input with exponent $k$, over $R$ alphabet tokens; plotted are individual curves for the values $R\in\{3,5,10,15,20,30,40,60,100\}$, from top to bottom.
    For all $R$, $f(R,k)$ drops off exponentially with growing $k$.}
    \label{fig:grover-speedups}
\end{figure}

There exists no classical algorithm to solve this problem based on taking stochastic samples from the decoder $M$ that requires less than $\Omega(R^n)$ samples.
\end{theorem}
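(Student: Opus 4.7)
The plan is to upper bound the quantum runtime via the biased sampler feeding a quantum maximum-finding subroutine, and to lower bound the classical sample complexity by a direct information-theoretic argument. For the quantum upper bound, I would invoke the sampler of \cref{th:sampler} (denoted $\op U_\mu$), which prepares the state $\ket\mu = \sum_{x\in\Omega} \sqrt{\Pr(M_n=x)}\ket{x}$, and then feed it into the quantum maximum-finding routine of \cite{VanApeldoorn2017}. Given a weighted superposition and an efficiently computable score function $F$, that routine locates the maximiser $\tau$ in an expected $\BigO\!\left(1/\sqrt{\Pr(M_n=\tau)}\right)$ calls to $\op U_\mu$ --- essentially a D\"urr--H\o yer-style iteration using amplitude amplification onto the marked state. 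The crucial observation is that, under the \cref{eq:independent-scoring} promise, $\tau$ is itself random: it sits at index $i$ with probability $p_i = \Pr(M_n=\text{string at index }i)$. Taking expectation over this randomness, the expected number of queries becomes
\begin{equation*}
    \Expect_\tau\!\left[ \frac{1}{\sqrt{\Pr(M_n=\tau)}} \right]
    \;=\; \sum_i p_i \cdot \frac{1}{\sqrt{p_i}}
    \;=\; \sum_i \sqrt{p_i}.
\end{equation*}

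The next step is to evaluate this sum under the iid $\PowerLaw_{|\Sigma|}(k)$ assumption with branching ratio $R\le|\Sigma|$. Because the $X_i$ are independent, $\Pr(M_n=x)$ factorises across the $n$ positions, so $\sum_x \sqrt{\Pr(M_n=x)}$ factorises into an $n$-fold product of per-position sums. Using $\Pr(X_i=j)=j^{-k}/H_R(k)$ for $j=1,\ldots,R$, each factor equals $H_R(k/2)/\sqrt{H_R(k)}$, so that
\begin{equation*}
    \sum_i \sqrt{p_i} \;=\; \left( \frac{H_R(k/2)}{\sqrt{H_R(k)}} \right)^{\!n} \;=\; R^{nf(R,k)}
\end{equation*}
by the definition of $f(R,k)$. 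Combined with the per-call cost above, this yields the claimed $\mathrm{RT}_1(R,k,n)=\BigO(R^{nf(R,k)})$ expected iterations.

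For the classical lower bound I would follow the reasoning of \cite{Montanaro2011}: any algorithm that only queries $M$ by drawing samples sees string $x$ with probability $\Pr(M_n=x)=p_{i(x)}$, which by the HSP promise is exactly the prior probability that $\tau$ is that string. Hence a sample provides no Bayesian update beyond the marginal distribution of $\tau$, and one encounters a particular string for the first time only after an expected $1/p_{i(x)}$ draws. Averaging this hitting time against the very same distribution $p$ gives $\sum_x p_x/p_x=|\Omega|$ expected samples; and under iid input with branching ratio $R$ one has $|\Omega|=R^n$, matching the claimed $\Omega(R^n)$ classical lower bound.

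The main obstacle is the per-call guarantee for the quantum maximum-finding primitive of \cite{VanApeldoorn2017}: one has to check that, when supplied with the biased preparation $\op U_\mu$, the routine genuinely scales as $1/\sqrt{\Pr(M_n=\tau)}$ (rather than degenerating to the D\"urr--H\o yer bound $\sqrt{|\Omega|}$), so that the expectation over $\tau$ factorises cleanly across coordinates. Once that is in place the remainder is bookkeeping: iid-ness makes the expectation decouple per site, the per-site sum is exactly $H_R(k/2)/\sqrt{H_R(k)}$, and rewriting as a power of $R$ recovers the stated exponent.
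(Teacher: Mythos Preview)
Your proposal is correct and follows essentially the same approach as the paper: invoke the biased sampler $\op U_\mu$ together with the quantum maximum-finding routine of \cite{VanApeldoorn2017} (the paper's \cref{thm:maximumfinding}), upper-bound the per-call cost by $1/\sqrt{p_\tau}$, and take the expectation over the random location of $\tau$ to obtain $\sum_x\sqrt{p_x}=(H_R(k/2)/\sqrt{H_R(k)})^n$ via the iid factorisation (this is precisely the paper's \cref{lem:simple-rt}); the classical lower bound is likewise the $\sum_x p_x/p_x=|\Omega|=R^n$ argument attributed to \cite{Montanaro2011}. The only point the paper treats more explicitly is the passage from a $\PowerLaw_{|\Sigma|}(k)$ input to an effective $\PowerLaw_R(k)$ over the $R$ accepted branches (discussed in \cref{sec:powerlaw-input}), which you implicitly assume when writing $\Pr(X_i=j)=j^{-k}/H_R(k)$.
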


The exponent $f(R,k)$ indicates the speedup over a classical implementation of the decoding algorithm (which would have to search over $R^n$ elements).
We find that $f(R,k)<1/2$ for all $R,k>0$, and in fact $f(R,k)\longrightarrow 0$ exponentially quickly with $k$; we formulate the following corollary.
\begin{corollary}\label{cor:4}
For $k>0$, \textsc{QuantumSearchDecode} is always faster than plain Grover search (with runtime $\propto R^{n/2}$); the extent of the speedup depends on the branching ratio $R$ and the power law exponent $k$ (see \cref{fig:grover-speedups}).
\end{corollary}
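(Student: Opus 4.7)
The plan is to reduce both claims to elementary inequalities on the generalised harmonic sums $H_R(k)=\sum_{i=1}^R i^{-k}$ appearing in the exponent $f(R,k)=\log\!\bigl(H_R(k/2)/H_R(k)^{1/2}\bigr)/\log R$ given by \cref{th:2}.

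First, to establish $f(R,k)<1/2$ for every $k>0$ and every integer $R\ge 2$, I would observe that the claim is equivalent to
\begin{equation*}
H_R(k/2)^{2} \;<\; R\cdot H_R(k).
\end{equation*}
This is precisely Cauchy--Schwarz applied to the vectors $u=(i^{-k/2})_{i=1}^R$ and $v=(1,\ldots,1)\in\field R^R$: indeed $\langle u,v\rangle = H_R(k/2)$, $\|u\|_2^2 = H_R(k)$, and $\|v\|_2^2 = R$. The inequality is strict unless $u$ is proportional to $v$, i.e.\ unless $i^{-k/2}$ is independent of $i$; for $R\ge 2$ this forces $k=0$. Hence for all $k>0$ and $R\ge 2$ one gets $f(R,k)<1/2$ strictly, which is the first assertion of the corollary.

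Second, to obtain the exponential decay $f(R,k)\to 0$ as $k\to\infty$, I would write $H_R(k/2) = 1+a_k$ and $H_R(k) = 1+b_k$, where
\begin{equation*}
a_k \;=\; \sum_{i=2}^R i^{-k/2}, \qquad b_k \;=\; \sum_{i=2}^R i^{-k},
\end{equation*}
so that both $a_k, b_k \to 0$ with $a_k = 2^{-k/2}\bigl(1+o(1)\bigr)$ and $b_k = 2^{-k}\bigl(1+o(1)\bigr)$ as $k\to\infty$, the $i=2$ term dominating each tail. A Taylor expansion then gives
\begin{equation*}
\log\!\frac{1+a_k}{\sqrt{1+b_k}} \;=\; a_k - \tfrac12 b_k + \BigO(a_k^2) \;=\; 2^{-k/2}\bigl(1+o(1)\bigr).
\end{equation*}
Dividing by $\log R$ yields $f(R,k) = \bigl(1+o(1)\bigr)\,2^{-k/2}/\log R$, the desired exponential decay in $k$. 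Together with the Cauchy--Schwarz bound this produces the figure-of-merit $R^{nf(R,k)}$ referenced in \cref{fig:grover-speedups}, confirming a super-Grover speedup for every $k>0$.

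No step presents a real obstacle; the only care needed is in the second part, where one must verify that the higher-order terms from $\log(1+x) = x - x^2/2 + \cdots$ and from the $i\ge 3$ tail contributions are genuinely $o(2^{-k/2})$, so that the leading asymptotic $2^{-k/2}/\log R$ is tight. The first part is immediate once the inequality is recognised as Cauchy--Schwarz with the all-ones vector, which is the natural choice since the exponent of $R$ in the runtime corresponds to $\|v\|_2^2$.
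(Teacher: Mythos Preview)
Your argument is correct. The Cauchy--Schwarz step with $u=(i^{-k/2})_i$ and $v=(1,\dots,1)$ cleanly yields $H_R(k/2)^2<R\,H_R(k)$, equivalently $f(R,k)<1/2$, with strictness for $k>0$ and $R\ge2$; and your asymptotic expansion correctly isolates the $2^{-k/2}/\log R$ leading term, with the $i\ge3$ tails and the quadratic remainder both $o(2^{-k/2})$ as you note.

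It is worth pointing out that the paper itself does not supply a proof of this corollary: the surrounding text simply \emph{asserts} that ``the exponent $f(R,k)\in(0,1/2)$'' and that it ``falls off exponentially'' with $k$, referring the reader to \cref{fig:grover-speedups} for the empirical picture. Your write-up therefore goes beyond what the paper offers, turning a numerical observation into an elementary analytic argument. If you wanted to match the paper's full parenthetical claim $f(R,k)\in(0,1/2)$, you could add one line for the lower bound: $H_R(k/2)^2=\bigl(\sum_i i^{-k/2}\bigr)^2>\sum_i i^{-k}=H_R(k)$ whenever $R\ge2$, since the square of a sum of positive terms strictly exceeds the sum of their squares. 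But this is not needed for the corollary as stated.
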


Finally, in \cref{sec:beam} we modify the full quantum search decoder by only searching over the paths with likelihood above some given threshold (that we allow to depend on $n$ in some fashion), turning the decoder into a type of beam search, but where the pruning only happens at the very end (\cref{alg:beam}).
This means that in contrast to beam search, the top scoring element is found over the \emph{globally} most likely parsed paths, avoiding the risk early beam pruning brings.
We analyse the runtime of \cref{alg:beam} for various choices of beam width numerically, and analyse its performance on a concrete example---Mozilla's \emph{DeepSpeech} implementation, a speech-to-text LSTM which we show to follow a power-law token distribution at each output frame (see supplementary material, Sec.~7 for an extended discussion).

\begin{emph}
For \emph{DeepSpeech}, we empirically find that input sequence lengths of up to 500 tokens can realistically be decoded, with an effective beam width of $10^{15}$ hypotheses---while requiring $\approx 3\times 10^6$ search iterations (cf.\ \cref{fig:deepspeech-2}). 
\end{emph}
As expected, the super-Grover speedup from \cref{cor:4} is achieved in the regime where full \textsc{QuantumSearchDecoding} happens; once the beam width saturates, the speedup asymptotically approaches a quadratic advantage as compared to classical beam search.

\section{Quantum Search Decoding}\label{sec:quantum-beam-search}
In this section, we give an explicit algorithm for \textsc{QuantumSearchDecode}.
As mentioned before (see \cref{prob:mostLikelyParse}), we assume we have access to a classical sampling algorithm that, given a list of transition probabilities determined by the inputs $X_1,\ldots,X_n$, yields a random sample drawn uniformly from the distribution.
Since this sampler is given as a classical probabilistic program, we first need to translate it to a quantum algorithm.
We start with the following lemma.
\begin{lemma}\label{lem:rev} 
For a probabilistic classical circuit with runtime $T(n)$ and space requirement $S(n)$ on an input of length $n$, there exists a quantum algorithm that runs in time $\BigO(T(n)^{\log_2 3})$ and requires $\BigO(S(n)\log T(n))$ qubits.
\end{lemma}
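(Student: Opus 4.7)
The plan is to realise the classical probabilistic circuit as a unitary on a quantum computer via two ingredients: (i) coherently generating the random coins in superposition, and (ii) applying Bennett's space-efficient reversible simulation to the resulting deterministic computation.

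For (i), any probabilistic circuit of runtime $T(n)$ consumes at most $T(n)$ random bits in the worst case. I would adjoin an ancilla register of $T(n)$ qubits initialised to $|0\rangle$ and apply a Hadamard to each, producing a uniform superposition over all coin sequences. Conditioned on any particular value of this register the remaining computation is a deterministic classical circuit of size $\BigO(T(n))$ and working space $\BigO(S(n))$, and measuring the output register at the end reproduces the original probabilistic output distribution.

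For (ii), I would first replace each elementary classical gate by a Toffoli-based reversible implementation that consumes and produces ``garbage'' ancillas, then organise the overall simulation according to Bennett's recursive pebbling scheme with branching factor $2$. At each level the scheme splits the computation into two halves of length $T/2$ and performs three sub-simulations (compute the first half forward to obtain a checkpoint, compute the second half forward, then uncompute the first half to free the checkpoint), so that intermediate configurations can be recycled. Unrolling the recursion to depth $\log_2 T(n)$ yields runtime $3^{\log_2 T(n)} = T(n)^{\log_2 3}$, while at most $\BigO(\log T(n))$ snapshots of size $\BigO(S(n))$ are live simultaneously, giving a qubit count of $\BigO(S(n)\log T(n))$ as claimed.

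The main technical point to verify is that the quantum implementation actually returns every ancilla introduced by the reversible gadgets to $|0\rangle$ after each uncomputation stage. Only then is the resulting map a genuine unitary on its nominal input/output registers, free of residual entanglement with garbage that would otherwise destroy the interference exploited by downstream subroutines such as the amplitude amplification underlying \cref{th:1}. This invariant is preserved automatically by Bennett's construction when applied at the circuit level using Toffoli-based gadgets and standard uncomputation, so composing the two reductions gives the stated time and space bounds.
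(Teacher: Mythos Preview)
Your approach is essentially the paper's: both invoke Bennett's reversible simulation; the paper cites the Buhrman--Tromp--Vit\'anyi time--space tradeoff (their Theorem~1) and specialises the free parameter to $k=\log_2 T$, whereas you redo the particular $3$-for-$2$ pebbling recursion by hand. The resulting time bound $T^{\log_2 3}$ and the $\BigO(S\log T)$ checkpoint space match.

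There is, however, a gap in your space accounting. In step~(i) you materialise all $T(n)$ random coins as an ancilla register, but in step~(ii) you report a total qubit count of $\BigO(S(n)\log T(n))$, silently omitting those $T(n)$ coin qubits. Since one typically has $T(n)\gg S(n)\log T(n)$, the coin register dominates and the stated bound is not met by your construction. Nor can the coins simply be folded into the pebbling and uncomputed on the fly: for a sampler (which is the intended use of this lemma) the output genuinely depends on every coin, so each coin remains entangled with the output register and a trailing Hadamard will not return it to $\ket{0}$. The paper's own proof is terse enough that it never confronts this point explicitly---it just applies the cited deterministic tradeoff and then ``translates'' to a quantum circuit---so one could argue the looseness lives in the lemma statement as much as in your write-up; but your explicit $T(n)$-qubit register makes the inconsistency visible and should be addressed.
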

\begin{proof}
Follows from Thm.~1 in \cite{Buhrman2001}; see supplementary material, Sec.~6.
\end{proof}

\subsection{Biased Quantum Sampling from a Regular or Context-Free Grammar}\label{sec:sampler}
Given a sampler that can yield \emph{uniformly} distributed strings $s_i$ of a language,
we want to raise it to a quantum circuit $\op U_\mu$ that produces a quantum state which is a \emph{biased} superposition over all such strings $s_i=a_{i1}a_{i2}\cdots a_{in}$, where each string is weighted by the probability $p_{ij}$ of the symbol $a_{ij}$ occurring at index $j$ (i.e.\ by \cref{eq:product-of-probs}).
In addition to the weighted superposition,  we would like to have the weight of each state in the superposition spelled out as an explicit number in an extra register (e.g.\ as a fixed precision floating point number), i.e.\ as
\begin{equation}\label{eq:mu}
    \op U_\mu\ket0 = \ket{\mu} \propto \sum_{q\in\Omega} \sqrt{p_q}\ket{h_q}\ket{p_q}\ket{q},
\end{equation}
where $\Omega$ is the set of accepted strings reachable by the decoder in $n$ steps, $\ket{h_q}$ is an ancillary state that depends on $q$ and is contained in the decoder's work space, where $q$ is a state reached by reading the input sequence $a_{q1},a_{q2},\ldots,a_{qn}$.
The weights $p_q = \prod_{j=1}^n p_{qj}$.

As outlined in the introduction, we know there exist uniform classical probabilistic samplers for large classes of grammars, e.g.\ for regular languages in linear time (e.g.\ \cite{Oudinet2013}) and polynomial time for variants of CFGs (e.g.\ \cite{Goldwurm2001}).
Keeping the uniform sampler's runtime separate from the rest of the algorithm, we can raise the sampler to a biased quantum state preparator for $\ket \mu$.
\begin{theorem}\label{th:sampler}
Given a classical probabilistic algorithm that, in time $T(n)$, produces uniform samples of length $n$ from a language,
and given a list of independent random variables $X_1,\ldots,X_n$ with pdfs $p_{ij}$ for $i=1,\ldots,n$ and $j=[\Sigma]$,
we can construct a quantum circuit $\op U_{\mu'}$ that produces a state $\ket{\mu'}$ $\epsilon$-close to the one in \cref{eq:mu}.
The algorithm runs in time $\BigO(T(n)^{1.6}\times n^3\kappa/\epsilon^2)$, where $\kappa$ is an upper bound on the relative variance of the conditional probability $\Pr(a|s_1 \dots s_i)$.
\end{theorem}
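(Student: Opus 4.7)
The plan is to lift the classical uniform sampler to a quantum uniform-superposition preparer, coherently annotate each branch of the resulting superposition with its desired weight $p_q$, and then bend the flat amplitudes into $\sqrt{p_q}$ by amplitude amplification. The three pieces correspond respectively to the $T(n)^{1.6}$, the $n^3$, and the $\kappa/\epsilon^2$ factors in the claimed bound.

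First I would apply \cref{lem:rev} directly to the classical uniform sampler, seeding its random-bit tape with $H^{\otimes\ast}\ket0$. This produces a quantum circuit $\op U_{\text{unif}}$ of size $\BigO(T(n)^{\log_2 3}) = \BigO(T(n)^{1.6})$ whose output is (close to) the uniform superposition
\[
 \op U_{\text{unif}}\ket 0 \;=\; \frac{1}{\sqrt{|\Omega|}}\sum_{q\in\Omega}\ket{h_q}\ket q,
\]
where $\ket{h_q}$ is the reversible-simulation garbage (and a register already present in the target state \cref{eq:mu}, so no uncomputation is needed). Conditioned on $\ket q$ I would walk through the $n$ tokens of $q$, read off the relevant input pdfs $p_{ij}$, and multiply the conditional probabilities into a fresh fixed-precision register, arriving at $|\Omega|^{-1/2}\sum_q\ket{h_q}\ket{p_q}\ket q$. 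The precision must be chosen so that rounding error accumulated across the $n$ multiplications stays below $\epsilon$, contributing an $n^2/\epsilon$ overhead from this stage alone.

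The last step is the biasing itself. Attaching a single ancilla and performing a $\ket{p_q}$-controlled rotation that sends it to $\sqrt{p_q/M}\ket 1+\sqrt{1-p_q/M}\ket 0$, with $M$ a classical upper bound on $\max_q p_q$, places amplitude exactly $\sqrt{p_q/M}$ on every $\ket1$-marked branch. A fixed-point amplitude amplification on this flag deterministically projects onto the marked subspace and yields the desired $\ket{\mu'}$. The AA iteration count scales as $\BigO(\sqrt{M|\Omega|/\sum_q p_q})$, i.e.\ the square root of the max-to-mean ratio of $p_q$ under the uniform measure on $\Omega$, and the final $\epsilon^{-2}$ factor enters from combining this count with the $\epsilon/n$ per-stage precision demanded by the error budget.

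The principal obstacle is bounding this max-to-mean ratio by a polynomial in $n$ and $\kappa$. A naive one-shot telescoping over the $n$ token positions accumulates the per-step variance \emph{multiplicatively} and gives a hopeless $\kappa^n$-type bound rather than the claimed $\kappa$. I expect the correct strategy to be a \emph{stepwise} biasing, in which $\ket\mu$ is constructed one token at a time --- uniformly extend the current prefix using $\op U_{\text{unif}}$, compute the next conditional probability into a register, apply the controlled rotation, and run fixed-point AA --- so that the amplification overhead at level $i$ is governed by the relative variance of the single conditional distribution $\Pr(a\mid s_1\ldots s_i)$ and hence by $\sqrt\kappa$. Showing that each intermediate state is (approximately) equal to the correct marginal of $\ket\mu$ on the first $i$ registers, and tracking how the per-level $\epsilon/n$ errors combine into the global $\epsilon$ guarantee, is then the main bookkeeping step of the proof.
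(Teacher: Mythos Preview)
Your stepwise intuition is right --- the paper does build $\ket\mu$ one token at a time --- but the biasing mechanism you propose is not the one that yields the stated bound, and I do not see how it can. If at level $i$ you extend the current superposition uniformly and then run amplitude amplification to reweight by $\Pr(a\mid s_1\ldots s_i)$, each AA round must reflect about the \emph{pre}-amplification state at level $i$, which in turn was produced by AA at level $i-1$, and so on. The rounds therefore nest multiplicatively, and your own per-level estimate of $\sqrt\kappa$ compounds to $\kappa^{n/2}$ rather than the single $\kappa$ in the theorem. Your attributions for the $n^3$ and $\epsilon^{-2}$ factors (arithmetic precision, fixed-point AA error) are also off: fixed-point AA gives logarithmic, not inverse-polynomial, $\epsilon$ dependence.

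The paper avoids amplitude amplification for the biasing altogether. At each step $i$, and coherently in superposition over the current prefix $s_1\ldots s_i$, it calls the uniform sampler $S$ times to draw completions, bins them by their next token $a$, and computes an empirical estimate $d_a\approx\Pr(a\mid s_1\ldots s_i)$ weighted by the input pdfs $p_{ij}$. It then \emph{directly} prepares $\sum_a\sqrt{d_a}\ket{d_a}\ket a$ on a fresh register and feeds $a$ into the next parser step. The error analysis is purely classical Monte Carlo: by Chebyshev, $S=\BigO(\kappa n^2/\epsilon^2)$ samples suffice to estimate each conditional distribution to relative error $\epsilon/n$, since $\kappa$ bounds $\Var(Y)/\Exp{Y}^2$ for the per-sample random variable. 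Summing over the $n$ levels gives $\BigO(\kappa n^3/\epsilon^2)$ sampler calls, each costing $\BigO(T(n)^{\log_2 3})$ after reversible compilation via \cref{lem:rev}. So the $\kappa/\epsilon^2$ is Chebyshev, not AA, and the $n^3$ is $n$ levels times $(n/\epsilon)^2$ samples per level.
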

\begin{proof}
See supplementary material, Sec.~2.
\end{proof}

Getting a precise handle on $\kappa$ strongly depends on the grammar to be parsed and the input presented to it; it seems unreasonable to claim any general bounds as it will most likely be of no good use for any specific instance.
However, we note that it is conceivable that if the input is long and reasonably independent of the language to be sampled, then $\kappa$ should be independent of $n$, and  $\kappa\approx1/p(r_\mathrm{min})$, where $p(r)$ is the distribution of the input tokens at any point in time---e.g.\ $p(r)\propto r^{-k}$ as in a power law.\footnote{%
This should make intuitive sense: the branching ratios are already biased with respect to the number of future strings possible with prefix $s$; if the input sequence is independent of the grammar, then we would expect them to weigh the strings roughly uniformly; the extra factor of $1/p(r_\mathrm{min})$ simply stems from the weighing of the token we bin by, namely $a$.}

\subsection{The Quantum Search Decoder}
The quantum algorithm underlying the decoder is based on the standard maximum finding procedure developed by \cite{Durr96quantumMax,Ahuja99Max}, and its extension in \cite{VanApeldoorn2017} used in the context of SDP solvers.

The procedure takes as input a unitary operator $\op U_\mu$ which prepares the advice state, and a scoring function $F$ which scores its elements, and returns as output the element within the advice state that has the maximum score under $F$.
As in \cref{sec:sampler}, we assume that $F$ can be made into a reversible quantum circuit to be used in the comparison operation. We also note that reversible circuits for bit string comparison and arithmetic are readily available \cite{Oliveira2007comp}, and can e.g.\ be implemented using quantum adder circuits \cite{Gidney2018Adder}.

\algrenewcomment[1]{\State \textcolor{gray}{\(//\) #1}}
\begin{algorithm}[t]
\begin{algorithmic}
\Function{QuantumSearchDecode$_m$}{$\op U_\mu$, $F$}
    \State $bestScore \gets -\infty$, $counter \gets 0$
    \Repeat
        \Comment{comparator against current best score}
        \State $cmp \gets \left[ (\cdot) \mapsto (bestScore < \cdot) \right]$
        \Comment{amplify elements $\ge$ pivot}
        \State $\ket{\psi} \gets \Call{ExponentialSearch}{\op U_\mu, cmp\circ F}$
        \Comment{measure new best score}
        \State $bestScore \gets \op M_\mathrm{score}\ket\psi$
        \State $counter \gets counter + 1$
    \Until{$counter = m$}
\EndFunction
\end{algorithmic}
\caption{Algorithm for quantum search decoding.}
\label{alg:main}
\end{algorithm}

\Cref{alg:main} lists the steps in the decoding procedure.
As a subroutine within the search loop, we perform exponential search with oblivious amplitude amplification \cite{Berry14ObliviousAmpAmp}.
As in the maximum finding algorithm, the expected query count for quantum search decoding is given as follows.
\begin{theorem}
\label{thm:maximumfinding}
If $x$ is the highest-scoring string, the expected number of iterations in {\normalfont\textsc{QuantumSearchDecode}} to find the maximum is $\BigO(\min\{ 1/{\Abs{\braket{x}{\mu}}},\sqrt n\})$.
\end{theorem}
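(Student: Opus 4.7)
The plan is to adapt the quantum maximum-finding analysis of D\"urr--H\o yer, as refined in the advice-state setting by van Apeldoorn--Gily\'en--Gribling--de Wolf \cite{VanApeldoorn2017}, to the structure of \textsc{QuantumSearchDecode}. The outer loop maintains a running threshold $bestScore$; in each round, \textsc{ExponentialSearch} with oblivious amplitude amplification on $\op U_\mu$ and the comparator $(\cdot > bestScore)\circ F$ outputs a measurement drawn from $\ket\mu$ conditioned on exceeding the threshold. Since the comparator is used only as a yes/no oracle, the role of $F$ in the analysis is purely to induce a total ordering on $\Omega$.

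First I would enumerate $\Omega=\{y_1,\ldots,y_N\}$ so that $F(y_1)>F(y_2)>\cdots>F(y_N)$, giving $y_1=x$, and set $w_i := |\braket{y_i}{\mu}|^2$ and $W_k := \sum_{i\le k}w_i$. When $bestScore = F(y_k)$, the projector amplified inside the next round has success probability exactly $W_{k-1}$ under $\ket\mu$; exponential search with unknown success probability therefore produces one accepted outcome using $\BigO(1/\sqrt{W_{k-1}})$ queries to $\op U_\mu$, and that outcome is $y_j$ with conditional probability $w_j/W_{k-1}$ for $j<k$. The threshold is then strictly increased.

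Next I would bound the total query count via a potential-function argument over the sequence of ranks $K_0>K_1>\cdots\ge 1$ visited by the outer loop (which first reaches $K_t=1$ precisely when $x$ has been sampled). Writing the expected total cost as
\[
\Exp{\sum_{t\ge 0} \BigO\bigl(1/\sqrt{W_{K_t-1}}\bigr)},
\]
and using that, conditional on $K_t=k$, the next rank is $j$ with probability $w_j/W_{k-1}$, a one-step estimate gives $\Exp{1/\sqrt{W_{K_{t+1}-1}} \mid K_t=k} \le c'/\sqrt{W_{k-1}}$ for an absolute constant $c'<1$ whenever $k>1$. Summing the resulting geometric series collapses the total to $\BigO(1/\sqrt{w_1})=\BigO(1/|\braket{x}{\mu}|)$. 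The alternative $\BigO(\sqrt n)$ bound is recovered by specialising to the uniform advice $|\braket{y_i}{\mu}|=1/\sqrt n$, which reproduces D\"urr--H\o yer unconditionally; taking the minimum of the two bounds yields the claim. The fixed iteration count $m$ in \cref{alg:main} is then chosen as a constant multiple of this expectation so that Markov's inequality upgrades ``expected time to find $x$'' into a high-probability termination guarantee.

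The main technical obstacle is the telescoping lemma: individual rounds can be expensive when $W_{K_t-1}$ is dominated by a single small weight $w_{K_t-1}$, and one must argue that in such rounds the rank subsequently drops by a large multiplicative factor, so the resulting series is geometric rather than harmonic. The cleanest path is to couple the rank chain $\{K_t\}$ with a weighted variant of the D\"urr--H\o yer tournament, importing the potential argument of \cite{VanApeldoorn2017} essentially verbatim, with the $w_i$ now interpreted as the squared amplitudes produced by the biased sampler $\op U_\mu$ from \cref{th:sampler}. The $\epsilon$-error incurred in replacing $\ket\mu$ by $\ket{\mu'}$ in that lemma only perturbs the $w_i$ by $\BigO(\epsilon)$, so it contributes lower-order corrections and does not affect the asymptotic bound.
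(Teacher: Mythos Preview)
The paper's own proof is a single sentence: ``Immediate by \cite{VanApeldoorn2017}.'' Your reconstruction is therefore strictly more detailed than what the paper supplies, and the framing you set up---ranks $K_0>K_1>\cdots$, partial weights $W_k=\sum_{i\le k}w_i$, per-round cost $\BigO(1/\sqrt{W_{K_t-1}})$---is exactly the right scaffolding.

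There is, however, a genuine gap in the heart of your argument. You assert a one-step contraction
\[
\Exp{1/\sqrt{W_{K_{t+1}-1}}\ \Big|\ K_t=k}\ \le\ c'/\sqrt{W_{k-1}}\quad\text{with }c'<1,
\]
and then sum a geometric series. But the ranks decrease, so $W_{K_{t+1}-1}<W_{K_t-1}$ and the per-round cost \emph{increases} along the chain; no $c'<1$ can hold. Already for uniform weights $w_i=1/N$ one gets
\[
\Exp{1/\sqrt{W_{K_{t+1}-1}}\ \Big|\ K_t=k}\ \approx\ \frac{2\sqrt N}{\sqrt{k-1}}\ =\ 2\cdot\frac{1}{\sqrt{W_{k-1}}},
\]
i.e.\ $c'\approx 2$, and a forward geometric sum from the cheap first round cannot produce $1/\sqrt{w_1}$. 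The argument that actually works---and is what \cite{VanApeldoorn2017} do---uses the rank-visitation identity $\Pr[\text{threshold ever at rank }j]=w_j/W_j$ followed by the telescoping bound
\[
\sum_{j\ge 2}\frac{w_j}{W_j\sqrt{W_{j-1}}}\ \le\ 2\sum_{j\ge 2}\left(\frac{1}{\sqrt{W_{j-1}}}-\frac{1}{\sqrt{W_j}}\right)\ \le\ \frac{2}{\sqrt{w_1}}.
\]
You do eventually say ``the cleanest path is to \ldots import the potential argument of \cite{VanApeldoorn2017} essentially verbatim,'' which is precisely the paper's proof; the intermediate geometric-series route should be dropped rather than presented as the main line.

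One smaller point: the $\sqrt n$ branch of the minimum is not obtained by ``specialising to uniform advice'' in the analysis of the given $\ket\mu$---that would be proving a bound for a different input. It comes from the option of running plain D\"urr--H\o yer (equivalently, replacing $\op U_\mu$ by a uniform-superposition preparator) and taking whichever of the two procedures terminates first.
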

\begin{proof}
Immediate by \cite{VanApeldoorn2017}.
\end{proof}

\section{Power Law Decoder Input}\label{sec:powerlaw-input}
In this section we formally prove that if the decoder is fed independent tokens that are distributed like a power law, then the resulting distribution over the parse paths yields a super-Grover speedup---meaning the decoding speed is faster than applying Grover search, which itself is already quadratically faster than a classical search algorithm that traverses all possible paths individually.

A power law distribution is the discrete variant of a Pareto distribution, also known as Zipf's law, which ubiquitously appears in the context of language features \cite{Jager2012,Stella2016,Egghe2000,Piantadosi2014}.
This fact has already been exploited by some authors in the context of generative models \cite{Goldwater2011}.

Formally, we define it as follows.
\begin{definition}\label{def:powerlaw}
Let $A$ be a finite set with $|A|=R$, and $k>1$. Then $\PowerLaw_R(k)$ is the power law distribution over $R$ elements: for $X\sim\PowerLaw_R(k)$ the probability density function $\Pr(X=x) = r^{-k} / H_R(k)$ for an element of rank $r$, where $H_R(k)$ is the $R${\textsuperscript{th}} harmonic number of order $k$ (\cref{cor:2}).
\end{definition}
We are interested in the Cartesian product of power law random variables, i.e.\ sequences of random variables of the form $(X_1,\ldots,X_n)$.
Assuming the random variables $X_i\sim\PowerLaw_R(k)$ are all independent and of rank $r_i$ with pdf $q(r_i)=r_i^{-k}/H_R(k)$, respectively, it is clear that
\begin{equation}\label{eq:cartesian-product-of-powerlaws}
    p(r_1,\ldots,r_n) = \prod_{i=1}^n q(r_i) = \frac{1}{H_R(k)^n}\frac{1}{(r_1\cdots r_n)^k}.
\end{equation}
As in \cite{Montanaro2011}, we can upper bound the number of decoder queries in \textsc{QuantumSearchDecode} by calculating the expectation value of the iterations necessary---given by \cref{thm:maximumfinding}---with respect to the position of the top element.

We assume that at every step, when presented with choices from an alphabet $\Sigma$, the parsed grammar branches on average $R\le|\Sigma|$ times.
Of course, even within a single time frame, the subset of accepted tokens may differ depending on what the previously-accepted tokens are.
This means that if the decoder is currently on two paths $\beta_1$ (e.g.\ corresponding to ``I want'') and $\beta_2$ (``I were''), where the next accepted token sets are $\Sigma_1,\Sigma_2 \subset \Sigma$ (each different subsets of possible next letters for the two presented sentences), respectively, then we do \emph{not} necessarily have that the total probability of choices for the two paths---$\Pr(\Sigma_1)$ and $\Pr(\Sigma_2)$---are equal.
But what does this distribution over all possible paths of the language, weighted by \cref{eq:product-of-probs}, look like?

Certainly this will depend on the language and type of input presented.
Under a reasonable assumption of independence between input and decoded grammar, this becomes equivalent to answering the following question: let $X$ be a product-of-powerlaw distribution with pdf given in \cref{eq:cartesian-product-of-powerlaws}, where every term is a powerlaw over $\Sigma$.
Let $Y$ be defined as $X$, but with a \emph{random subset} of elements deleted; in particular, such that $R^n$ elements are left, for some $R < |\Sigma|$.
Is $Y$ distributed as a product-of-powerlaws as in \cref{eq:cartesian-product-of-powerlaws}, but over $R$ elements at each step? In the case of continuous variables this is a straightforward calculation (see supplementary material, Sec.~4); numerics suggest it also holds true for the discrete case.

But even if the input that the parser given is independent of the parsed grammar, it is not clear whether the \emph{sample distribution} over $R$ (i.e.\ sampling $R$ out of $|\Sigma|$ power-law distributed elements) follows the same power law as the original one over $\Sigma$;
this is in fact not the case in general \cite{Zhu2015a}.
However, it is straightforward to numerically estimate the changed power 
law exponent of a sample distribution given $R$ and $|\Sigma|$---and we note that the exponent shrinks only marginally when $R<|\Sigma|$.

In this light and to simplify the runtime analysis, we therefore assume the decoder accepts exactly $R$ tokens at all times during the parsing process (like an $R$-ary tree over hypotheses) with a resulting product-of-powerlaw distribution, and give the runtimes in terms of the branching ratio, and not in terms of the alphabet's size.
This indeed yields a fair runtime for comparison with a classical variant, since any classical algorithm will \emph{also} have the aforementioned advantage (i.e.\ we assume the size of final elements to search over is $R^n$, which precisely corresponds to the number of paths down the $R$-ary tree).

\subsection{\textsc{Most Likely Parse}: Query Bound}
In this case $F$ simply returns $p_q$ as the score in \cref{eq:mu}. 
It thus suffices to calculate the state overlap $\Abs{\braket{x}{\mu}}$, under the assumption that $x$ is the highest mass point of the probability density function.
By \cref{eq:cartesian-product-of-powerlaws}, we have $\Abs{\braket{x}{\mu}}^2 = H_R^{-n}(k)$. The claim of \cref{cor:2} follows from these observations.

\subsection{\textsc{Highest Score Parse}: Simple Query Bound}
We aim to find a top element scored under some function $F$ under the promise that $\ket\mu$ (given in \cref{eq:mu}) presents good advice on where to find it, in the sense of \cref{eq:independent-scoring}.
The expected runtimes for various power law falloffs $k$ can be obtained by taking the expectation with respect to $p_x$ as in \cite{Montanaro2011}.

In order to do so, we need to be able to calculate expecation values of the cartesian product of power law random variables, where we restrict the domain to those elements with probability above some threshold.
We start with the following observation.
\begin{lemma}\label{lem:simple-rt}
If {\textsc{QuantumSearchDecode}} receives as input iid random variables $X_1,\ldots,X_n$, with $X_i\sim\PowerLaw_R(k)$, then the number of queries required to the parser is
$
    \mathrm{RT}_1(R,k,n) = \BigO\left( H_R(k/2)^n / H_R(k)^{n/2} \right).
$
\end{lemma}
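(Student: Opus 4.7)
The plan is to combine Theorem~\ref{thm:maximumfinding} with the promise in Eq.~\eqref{eq:independent-scoring} and the iid power-law structure of the input, then reduce the expected runtime to a product of one-dimensional sums over ranks.

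First I would apply Theorem~\ref{thm:maximumfinding}: conditioned on the top-scoring element being the string $x$, \textsc{QuantumSearchDecode} terminates in expected $\BigO(1/|\braket{x}{\mu}|)$ iterations. Next, I would take the expectation of this conditional runtime with respect to the random position of $\tau$ in the decoder's state space. By the promise of \prob{HSP}, the probability that $\tau$ coincides with $x$ equals $p_x = |\braket{x}{\mu}|^2$, so
\begin{equation*}
    \mathrm{RT}_1(R,k,n) \;=\; \sum_{x\in\Omega} p_x \cdot \frac{1}{\sqrt{p_x}} \;=\; \sum_{x\in\Omega}\sqrt{p_x}.
\end{equation*}
(The $\sqrt n$ cap in Theorem~\ref{thm:maximumfinding} can only improve this bound, so it is safe to drop.)

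Second, I would parameterise each accepted path $x$ by its sequence of token ranks $(r_1,\dots,r_n)\in\{1,\dots,R\}^n$. Under the working assumption of the section (an $R$-ary tree of hypotheses with iid $\PowerLaw_R(k)$ inputs), Eq.~\eqref{eq:cartesian-product-of-powerlaws} gives
\begin{equation*}
    p_x \;=\; \frac{1}{H_R(k)^n}\,\prod_{i=1}^n r_i^{-k}.
\end{equation*}
Substituting into the sum and using that the ranks range independently over $\{1,\dots,R\}$, the sum factorises:
\begin{equation*}
    \sum_{x\in\Omega}\sqrt{p_x} \;=\; \frac{1}{H_R(k)^{n/2}}\prod_{i=1}^n\sum_{r_i=1}^R r_i^{-k/2} \;=\; \frac{H_R(k/2)^n}{H_R(k)^{n/2}},
\end{equation*}
which is exactly the claimed bound.

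The main obstacle is conceptual rather than computational: justifying that the expectation of the \emph{conditional} runtime $1/\sqrt{p_x}$ with respect to $p_x$ is the right quantity to track, which requires invoking the HSP promise~\eqref{eq:independent-scoring} to identify the distribution of $\tau$'s position with the amplitude profile of $\ket\mu$. Once that identification is made, the factorisation over coordinates is immediate from independence of the $X_i$, and the remaining manipulation is just recognising the inner sum as a harmonic number with halved exponent.
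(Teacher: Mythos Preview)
Your proposal is correct and follows essentially the same approach as the paper: drop the $\sqrt{n}$ cap from Theorem~\ref{thm:maximumfinding}, take the expectation $\sum_x p_x/\sqrt{p_x}=\sum_x\sqrt{p_x}$, and factorise over the iid ranks to obtain $H_R(k/2)^n/H_R(k)^{n/2}$. The only difference is that you are more explicit than the paper about invoking the promise~\eqref{eq:independent-scoring} to justify the expectation, which is a welcome clarification rather than a departure.
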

\begin{proof}
The expectation value of $1/\braket{x}{\mu}$ is straightforward to calculate; writing $\vec r=(r_1,\ldots,r_n)$, by \cref{eq:cartesian-product-of-powerlaws}, we have
\begin{align*}
    \mathds E(1/\braket{x}{\mu}) &= \sum_{\vec r}p(\vec r) \times\frac{1}{\sqrt{p(\vec r)}}   \\
    &= \frac{1}{H_R(k)^{n/2}} \sum_{r_1=1}^R\cdots\sum_{r_n=1}^R \frac{1}{(r_1\cdots r_n)^{k/2}}.
\end{align*}
As $\BigO(\min\{ 1/\braket{x}{\mu}, \sqrt n \}) \le \BigO(1/\braket{x}{\mu})$ the claim follows.
\end{proof}
We observe that the runtime in \cref{lem:simple-rt} is exponential in $n$. Nevertheless, as compared to a Grover algorithm---with runtime $R^{n/2}$---the base is now dependent on the power law's falloff $k$.
We can compare the runtimes if we rephrase $\mathrm{RT}_1(R,k,n)=R^{n f(R,k)}$, by calculating
\begin{align*}
    \left( \frac{H_R(k/2)}{H_R(k)^{1/2}} \right)^n &= R^{n f(R,k)}& \\
    \Longleftrightarrow
    f(R,k) &= \log\left( \frac{H_R(k/2)}{H_R(k)^{1/2}} \right)\Big/\log R.
\end{align*}
We observe that the exponent $f(R,k)\in(0,1/2)$, i.e.\ it is always faster than Grover, and always more than quadratically faster than classically.
The exponent's precise dependency on $k$ for a set of alphabet sizes $R$ is plotted in \cref{fig:grover-speedups}. For growing $k$, $f(R,k)$ falls off exponentially.

\subsection{\textsc{Most Likely Parse}: Full Query Bound}\label{sec:full-query-bound}

\emph{A priori}, it is unclear how much we lose in \cref{lem:simple-rt} by upper-bounding $\BigO(\min\{ 1/\braket{x}{\mu}, \sqrt n \})$ by $\BigO(1/\braket{x}{\mu})$---so let us be more precise.
In order to evaluate the expectation value of the minimum, we will break up the support of the full probability density function $p(\vec r)$ into a region where $p(\vec r)>1/R^n$, and its complement. Then, for two constants $C_1$ and $C_2$, we have for the full query complexity
\begin{align}\label{eq:full-runtime-expectation}
\mathrm{RT}_2(R,k,n) &= \mathds E\left[\BigO(\min\{ 1/\braket{x}{\mu}, \sqrt n \})\right] \\
&= C_1\!\!\!\!\!\!\! \sum_{\vec r: p(\vec r)>1/R^n} \!\!\!\!\!\! \sqrt{p(\vec r)} + C_2 \sqrt n\!\!\!\!\!\!\!\sum_{\vec r: p(\vec r)\le 1/R^n} \!\!\!\!\!\!p(\vec r). \nonumber
\end{align}
In order to calculate sums over sections of the pdf $p(\vec r)$, we first move to a truncated Pareto distribution by making the substitutions
\begin{align*}
    \sum_{r\in A} \frac{1}{r^k}
    \longrightarrow
    \int_A \frac{1}{r^k} \dd r,\ 
    H_R(k)
    \longrightarrow
    h_R(k) := \int_1^R \frac{1}{r^k}\dd r.
\end{align*}
While this does introduce a deviation, its magnitude is minor, as can be verified numerically throughout (see Fig.~1, supplementary material, where we plot both $\mathrm{RT}_1$ and the continuous variant $\mathrm{RT}_{1'}(R,k,n):=h_R^n(k/2)/h_R^{n/2}(k)$).

\DeclareDocumentCommand{\MM}{O{c} O{n} O{R} O{k_1} O{k_2}}{M_{#1,#2}^{#3,#4,#5}}
The type of integral we are interested in thus takes the form
\begin{equation}\label{eq:integral-1}
    \MM := \frac{1}{h_R^n(k_1)}\iiint_1^R\frac{\chi(r_1\cdots r_n\le c)}{(r_1\cdots r_n)^{k_2}} \dd r_1\cdots \dd r_n, 
\end{equation}
where $k_1$ is not necessarily equal to $k_2$, and typically $c=(R/h_R(k_1))^{n/k_1}$, which would reduce to the case we are seeking to address in \cref{eq:full-runtime-expectation}. Here,
$\chi(\cdot)$ denotes the characteristic function of a set, i.e.\ it takes the value 1 where the premise is true, and 0 otherwise.
We derive the following closed-form expression.
\begin{lemma}\label{lem:integral}
For $k\neq 1$, \cref{eq:integral-1} becomes
\begin{align*}
    \MM &= \frac{(-1)^n}{k'^n h_R^n(k_1)}\!\!\! \sum_{j=0}^{\min\{n,\lfloor c'/a'\rfloor\}}
    \binom nj \Bigg(\ee^{a'k'j} \\
    &\quad\quad\quad\ \ \ 
     - \ee^{-c'k'}\sum_{l=0}^{n-1}\frac{(a'k'j-c'k')^l}{l!}
    \Bigg),
\end{align*}
where $k'=1-k_2$, $c'=\log c$, $a'=\log R$.
\end{lemma}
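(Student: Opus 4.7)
The multiplicative constraint $r_1\cdots r_n\le c$ together with the product form of both the integrand and the measure begs for the log substitution $s_i = \log r_i$, which maps the box $[1,R]^n$ onto $[0,a']^n$, converts the constraint to the additive one $s_1+\cdots+s_n \le c'$, and (after absorbing the Jacobian $\prod_i r_i = \ee^S$ into $(r_1\cdots r_n)^{-k_2}$) leaves the integrand $\ee^{k' S}$ with $k' = 1 - k_2$ and $S = \sum_i s_i$. The denominator factor $h_R^n(k_1)$ is untouched. Writing $I$ for the resulting integral, the task reduces to evaluating
\begin{equation*}
I \;=\; \int_{[0,a']^n} \ee^{k' S}\,\chi(S \le c')\,\dd s_1\cdots\dd s_n.
\end{equation*}

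My plan is then to handle the upper bound by inclusion--exclusion: decompose $\chi_{[0,a']}(s) = \chi_{[0,\infty)}(s) - \chi_{(a',\infty)}(s)$, expand the product $\prod_i \chi_{[0,a']}(s_i)$ into an alternating sum over subsets $J\subseteq [n]$, and in the $J$-term translate $u_i = s_i - a'$ for $i\in J$. With $j = |J|$, this converts the $\binom{n}{j}$ identical terms into $\ee^{k' j a'}\int_{u_i\ge 0,\, U \le c' - ja'} \ee^{k' U}\,\dd u$, where $U = \sum u_i$. When $ja' > c'$ the region is empty, accounting for the truncation $j\le \lfloor c'/a'\rfloor$ appearing in the statement.

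Each inner integral is then collapsed to one dimension by the standard simplex--volume identity $\mathrm{vol}\{u\ge 0 : U = v\} = v^{n-1}/(n-1)!$, giving $\int_0^{\tilde c} v^{n-1}\ee^{k'v}\,\dd v/(n-1)!$ with $\tilde c = c' - ja'$. Under the hypothesis $k\ne 1$ (equivalently $k'\ne 0$), this antiderivative admits the usual closed form from repeated integration by parts (or from the lower incomplete gamma function): it equals $(-1)^n/k'^n$ minus $\ee^{k'\tilde c}\sum_{l=0}^{n-1}(-k'\tilde c)^l/(l!\,(-k')^n)$. Multiplying by the prefactor $\ee^{k'ja'}$ and noting that $a'k'j + k'\tilde c = k' c'$, the two exponentials inside the bracket become $\ee^{k' j a'}$ and $\ee^{k' c'}$, with polynomial in $(a'k'j - c'k')^l$ exactly as in the stated expression.

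The remaining work is combinatorial bookkeeping: collect the $\binom{n}{j}$ factor, the signs from inclusion--exclusion, the $(-1)^n/k'^n$ from the antiderivative, and the $h_R^n(k_1)^{-1}$ normalisation, then verify that the upper index of the $j$-sum is $\min\{n,\lfloor c'/a'\rfloor\}$ (capped by $n$ from the number of subsets and by $\lfloor c'/a'\rfloor$ from feasibility). I expect the main—and essentially only—obstacle to be keeping the alternating signs from the inclusion--exclusion consistent with those coming out of the $(-k')^l$ pattern of the incomplete-gamma expansion, and checking a small case (e.g.\ $n=1$ with $c\ge R$, recovering $h_R(k_2)/h_R(k_1)$) to rule out an off-by-one in signs or summation limits. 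No nontrivial estimates are required; the lemma is a purely algebraic identity once the log substitution is performed.
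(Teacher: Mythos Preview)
Your approach is correct and genuinely different from the paper's. After the same log substitution, the paper takes the Fourier transform of the indicator $\chi(\bar z\le c')$, applies Fubini to factor the box integral, expands $(\ee^{a'(k'-\ii t)}-1)^n$ binomially, and then evaluates the remaining one-dimensional integral by contour integration and the residue theorem. This forces a case split on the sign of $k'$ (i.e.\ $k_2\lessgtr 1$), with the pole lying in the upper or lower half-plane; the two cases are then reconciled via a Stirling-number identity showing that $\sum_{j}\binom{n}{j}(-1)^j(a'k'j-c'k')^l$ vanishes for $l\le n-1$.

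Your inclusion--exclusion on the cube, followed by the simplex--volume reduction and the incomplete-gamma antiderivative, is strictly more elementary: no complex analysis, no case distinction on the sign of $k'$, and no auxiliary combinatorial identity. The paper's Fourier route is perhaps more systematic if one later wanted to replace $\chi(\bar z\le c')$ by a different constraint, but for the statement at hand your argument is shorter and cleaner.

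One remark on the bookkeeping you flag at the end: when you carry the $(-1)^j$ from inclusion--exclusion through and combine it with the $(-1)^n/k'^n$ from the antiderivative, your $n=1$ sanity check (with $c\ge R$, expecting $h_R(k_2)/h_R(k_1)$) will indeed pin down the signs---and will reveal that the displayed formula in the lemma carries a couple of sign typos (a missing $(-1)^j$ in the summand and $\ee^{-c'k'}$ in place of $\ee^{c'k'}$). These are slips in the stated target, not in your derivation; the paper's own Fourier computation produces the same corrected expression you will obtain.
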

\begin{proof}
See supplementary material, Sec.~3.
\end{proof}

\section{Quantum Beam Search Decoding}\label{sec:beam}
The goal of this section is to modify the \textsc{QuantumSearchDecoder} such that it behaves more akin to a classical beam search algorithm.
More specifically, instead of searching for the top scored element which could sit \emph{anywhere} within the advice distribution, we make the assumption that wherever the advice probability lies below some threshold $p(x)<p_0$---where $p_0$ can be very small---we discard those hypotheses.
This is done by dovetailing a few rounds of amplitude amplification to suppress all beam paths with probability less than $p_0$ (which we can do, since we have those probabilities written out as numbers within the advice state $\ket\mu$ in \cref{eq:mu}); a schematic of the algorithm can be found in \cref{alg:beam}.

Of course we only want to do this if the number of amplification rounds, given as the squareroot of the inverse of the leftover probability $\sum_{x:p(x)\ge p_0}p(x)$, is small (i.e.\ constant, or logarithmic in $n$).
We note that this expression is, as before, well-approximated by $\MM[p_0][n][R][k][k]$ given in \cref{lem:integral}.

\begin{algorithm}[t]
\begin{algorithmic}
\Function{QuantumBeamDecode$_m$}{$\op U_\mu$, $F$, $p_0$}
    \State $bestScore \gets -\infty$, $counter \gets 0$
    \Repeat
        \Comment{comparator against threshold}
        \State $cmp_1 \gets \left[ (\cdot) \mapsto (p_0 < \cdot) \right]$
        \Comment{comparator against current best score}
        \State $cmp_2 \gets \left[ (\cdot) \mapsto (bestScore < \cdot) \right]$
        \Comment{prune hypotheses}
        \State $amp \gets \left[ (\cdot) \mapsto \Call{AmplitudeAmplify}{\cdot, cmp_1} \right]$
        \Comment{select elements $\ge$ pivot}
        \State $\ket{\psi} \gets \Call{ExpoSearch}{amp \circ \op U_\mu, cmp_2\circ F}$
        \Comment{measure new best score}
        \State $bestScore \gets \op M_\mathrm{score}\ket\psi$ 
        \State $counter \gets counter + 1$
    \Until{$counter = m$}
\EndFunction
\end{algorithmic}
\caption{Algorithm for beam search decoding.}
\label{alg:beam}
\end{algorithm}

In beam search, only the top scoring hypotheses are kept around at any point in time; the difference to our method is of course that we can score the elements \emph{after every hypothesis has been built}. This is not possible in the classical case, since it would require an exponential amount of memory, or postselection.
As in \cref{sec:quantum-beam-search}, we have the two cases of finding the top scoring path and the most likely parse. Deriving a runtime bound for \textsc{Most Likely Parse} is straightforward---and does not, in fact, gain anything.
This is because when finding the maximum likelihood path $\tau$, one performs amplitude amplification on that element anyhow, and $p(\tau)>p_0$---so it is within the set of elements with probability kept intact by the post-amplification.\footnote{%
If anything, $p_0$ introduces some prior knowledge about the first pivot to pick for maximum finding.}

The only interesting case of amplifying the advice state in \textsc{QuantumSearchDecode} to raise it to a beam search variant is thus for the case of \textsc{Highest Score Parse}, using the decoder's output as advice distribution.
Instead of listing a series of results for a range of parameters, we provide an explicit example of this analysis with real-world parameters derived from Mozilla's DeepSpeech neural network in the next section, and refer the reader to Sec.~5 in the supplementary material for a more in-depth analysis of variants of a constant and non-constant amount of post-amplification.

\begin{figure}[t]
    \centering
    \includegraphics[width=10cm]{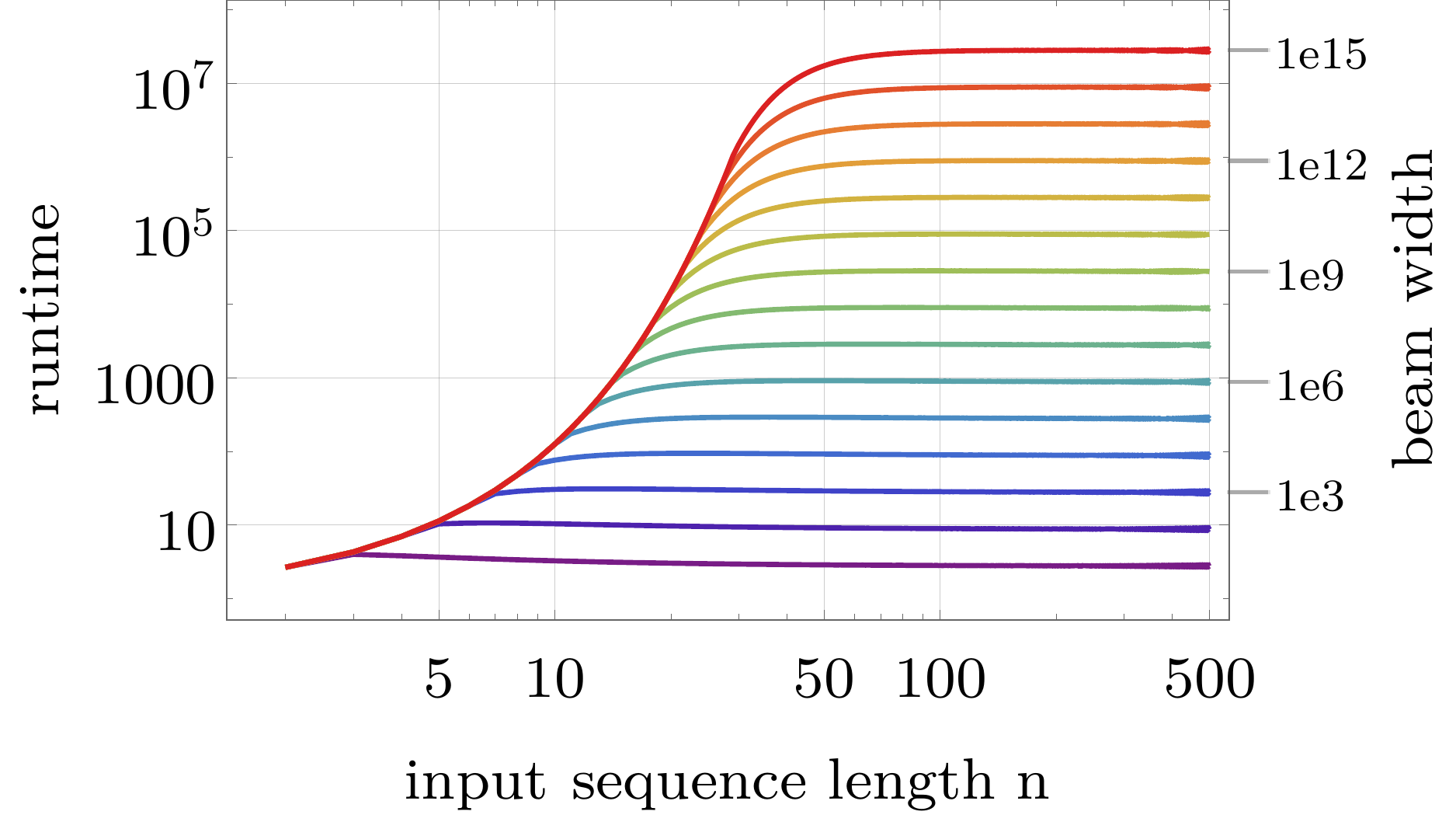}
    \caption{Runtime of quantum beam search decoding the output of Mozilla's \emph{DeepSpeech} LSTM with a grammar, assuming an average branching ratio of $R=5$, a token power law distribution with exponent $k=2.91$, and post-amplification of the quantum search decoder with a constant number of retained hypotheses $N_\mathrm{hyp}\in\{10^1,\ldots,10^{15}\}$, plotted in rainbow colors from purple to red, bottom to top.
    In the left region, where full \textsc{QuantumSearchDecoding} is performed (as the beam comprises all possible hypotheses), a super-Grover speedup is obtained (\cref{cor:4}).
    Where the beam width saturates, a Grover speedup is retained, and hypotheses are pruned only \emph{after} all hypotheses have been constructed.}
    \label{fig:deepspeech-2}
\end{figure}

\section{DeepSpeech}
\subsection{Analysis of the Output Rank Frequency}
To support the applicability of our model, we analysed our hypothesis that the output probabilities of an LSTM used to transcribe voice to letters---which can then be used e.g.\ in a dialogue system with an underlying parser---is distributed in a power-law like fashion.
More specifically, we use \emph{DeepSpeech}, Mozilla's implementation of Baidu's \emph{DeepSpeech} speech recognition system \cite{Hannun2014,Mozilla2019}; our hypothesis was that these letter probabilities follow a power-law distribution; our data supports this claim (see supplementary material, Sec.~7; also for a discussion of the LSTM's power-law output---a model feature---vs.\ the power-law nature of natural language features).


\subsection{Runtime Bounds for Quantum Beam Search Decoding}
We take the power law exponent derived from Mozilla's DeepSpeech neural network, $k=3.03$ (cf.~Sec.~5.2, supplementary material), and derive runtime bounds for decoding its output with a parser under the assumption that, on average, we take $R=5$ branches in the parsing tree at every time step.
As discussed in \cref{sec:powerlaw-input}, the sampling distribution over five elements only yields a slightly lower exponent of $k=2.91$.
How does quantum beam search perform in this setting, and how many hypotheses are actually searched over?
And what if we fix the beam's width to a constant, and increase the sequence length?
We summarise our findings in \cref{fig:deepspeech-2} (and supplementary material, Fig.~7).

\section{Summary and Conclusions}
We have presented a quantum algorithm that is modelled on and extends the capabilities of beam search decoding for sequences of random variables.
Studies of context sensitivity of language models have shown that state-of-the-art LSTM models are able to use about 200 tokens of context on average while working with standard datasets (WikiText2, Penn Treebank)  \cite{khandelwal2018LongDistanceLSTM}; state of the art transformer-based methods level off at a context window of size 512 \cite{Al-Rfou2019}.
On the other hand, under the premise of biased input tokens, our quantum search decoding method is guaranteed to find---with high constant success probability---the global optimum, and it can do so in expected runtime that is always more than quadratically faster than possible classically.
As demonstrated empirically (cf.\ \cref{fig:deepspeech-2}), our quantum beam search variant features a runtime independent of the sequence length: even for token sequences of length $>500$ the top $10^{14}$ global hypotheses can be searched for an optimal prediction, within $10^7$ steps.

We have further shown that neural networks used in the real world---concretely \emph{DeepSpeech}---indeed exhibit a strong power law distribution on their outputs, which in turn supports the premise of our algorithm.

\section{Acknowledgements}
J.\,B.\ would like to thank the Draper's Research Fellowship at Pembroke College.
S.\,S.\ would like to thank the Science Education and Research Board (SERB, Govt.\ of India) and the Cambridge Trust for supporting his PhD through a Cambridge-India Ramanujan scholarship.
We are grateful for the useful feedback and the comments we recieved from Jean Maillard, Ted Briscoe, Aram Harrow, Massimiliano Goldwurm, Mark Jerrum, and when presenting this work at IBM Zürich.
We further thank Terence Tao for the suggestion to try to take the Fourier transform of the indicator function in \cref{sec:full-query-bound}.

\linespread{1}
\printbibliography


\appendix 

\section{Quantum Computing: Preliminaries and Notation}

In this section we briefly review the basic notions and notations in quantum computation, referring to \cite{Nielsen2010} for more details. 

The usual unit of classical computation is the bit, a Boolean variable taking values in $\mathbb{Z}_2=\{0,1\}$. Its analogue in quantum computation is called the qubit, and represents the state of a physical quantum $2$-level system. A qubit can take values in $\C^2$, i.e.\ linear combinations or superpositions of two classical values (complex numbers)
\[
\alpha\ket0 + \beta\ket1
\] In particular we require that $|\alpha|^2+|\beta|^2=1$. We have also introduced the Dirac bra-ket in the above: 
\[
\ket0 := \begin{pmatrix}1\\0\end{pmatrix},~~~~~\ket1 := \begin{pmatrix}0\\1\end{pmatrix}.
\]
More generally, the set of states an $m$-qubit quantum register can take is the set of unit vectors
\begin{equation}
\label{eq:quantum-state}
\ket\phi = \sum_{i\in\{0,1\}^m}\alpha_i\ket i\text{ with } \alpha_i\in\C,\text{ such that }\sum_i|\alpha_i|^2=1
\end{equation}
in the Hilbert space spanned by a set of orthonormal basis vectors $\{\ket i, i\in\{0,1\}^m\}$, known as the \emph{computational basis}. Each $\alpha_i$ is called the \emph{amplitude of basis state $\ket i$}. We interpret the vector $\ket i$ as the $m$-dimensional complex vector $v_i$ with entries given by $(v_i)_j=\delta_{ij}$, and also interchangeably as the integer $i$ or the bit string that gives its binary representation $b_1\ldots b_m$ where $b_i$ is either $0$ or $1$. Furthermore, and of key importance to quantum mechanics and computation, the vector $\ket{b_1\ldots b_m}\in\C^{2^m}$ is interpreted as a tensor product 
\[\ket{b_1\ldots b_m}=\ket{b_1}\otimes\ket{b_2}\otimes\ldots\otimes\ket{b_m}\]
of the $m$ vectors $\ket{b_i}$ in $\C^2$. The $\otimes$ is often dropped for convenience, and we write $\ket{b_1}\ket{b_2}$ for $\ket{b_1}\otimes\ket{b_2}$.\\

\noindent\textbf{Unitary operators}: There are two ways in which we can compute on a state $\phi$. The first is by \emph{unitary evolution} of the system under the Schr\"odinger equation with a specified Hamiltonian operator $H$
\[
i\frac{d\ket\psi}{dt} = H\ket\psi,
\]
where $H$ is a hermitian matrix. Closed systems undergo reversible dynamics in quantum mechanics, and this dynamics is represented by unitary matrices. Since we can think of $\ket\phi$ as a vector in $\C^{2^m}$, a computation is represented by multiplication of this state by a $U\in\text{SU}(2^m)$, i.e. $\ket{\phi_{\text{out}}}=U\ket{\phi_{\text{in}}}$. Recall that a matrix $U$ is said to be unitary if $UU^{\dagger}=\1$, where $U^{\dagger}$ is the conjugate transpose of $U$. It is possible to compile a large `algorithm' $U$ down into elementary unitary operations, or quantum gates.\\

\noindent\textbf{Measurements}: The second kind of operation we can perform on $\ket\phi$ is measurement. For our purposes, note that the postulates of quantum mechanics say that on measuring the state $\ket\phi$ in \eqref{eq:quantum-state} \emph{in the basis $\{\ket i\}$}, we obtain as outcome the basis state $\ket i$ with probability $|\alpha_i|^2$. Since we have chosen states to be normalised, the measurement gives a valid probability mass function over the set of classical $m$-bit strings. After the measurement, the state ``collapses'' to the observed basis state $\ket{i}$, and no further information can be retrieved from the original state. \\

\noindent\textbf{Input models}: We will use two kinds of input models. The first is a quantum analogue of the classical query model, where inputs are accessed via a black-box or oracle that can be queried with an index $i$ and returns the $i$-th bit of the input bit string. For a bit string $x\in\{0,1\}^n$ we assume access to a unitary $\O_x$ which performs the map
\begin{equation}
    \O_x\ket i\ket b \ket z = \ket i \ket{b\oplus x_i} \ket{z},
\end{equation}
where the first register consists of $\lceil\log n\rceil$ qubits, the second is a single qubit register to store the output of the query, and the third is any additional workspace the quantum computer might have and is not affected by the query. Here $\oplus$ is addition on $\mathbb{Z}_2$, i.e.\ the $\cc{XOR}$ operation in Boolean logic. Note that $\O_x$ can be used by a quantum computer to make queries in superposition:
\begin{equation}
    \O_x\left(\frac1n\sum_{i=1}^n\ket i\ket b \ket z\right) = \frac1n\sum_i\ket i \ket{b\oplus x_i} \ket{z},
\end{equation}

\noindent\textbf{Complexity measures}: For many theoretical studies in complexity theory, the query input model is a powerful setting where several results have been proven. In this model, the total number of queries made to the input oracle is the primary measure of algorithmic complexity, known as the \emph{query complexity}. 

For practical purposes, it is more important to understand the the number of elementary quantum gates used to implement the unitary circuit corresponding to the algorithm in the quantum circuit model. This is known as the \emph{gate complexity} of the algorithm. The depth of the circuit is directly related to the time complexity, and gives an idea of how parallelisable the algorithm is.


\section{Biased Quantum Sampling from a Regular or Context-Free Grammar}\label{app:sampling}
In this section we rigorously proof Theorem 6, which we restate for completeness.
\setcounter{theorem}{5}
\begin{theorem}
Given a classical probabilistic algorithm that, in time $T(n)$, produces uniform samples of length $n$ from a language,
and given a list of independent random variables $X_1,\ldots,X_n$ with pdfs $p_{i,j}$ for $i=1,\ldots,n$ and $j=[\Sigma]$,
we can construct a quantum circuit $\op U_{\mu'}$ that produces a state $\ket{\mu'}$ $\epsilon$-close to the one in \cref{eq:mu}.
The algorithm runs in time $\BigO(T(n)^{1.6}\times n^3\kappa/\epsilon^2)$, where $\kappa$ is an upper bound on the relative variance of the conditional probability $Pr(a|s_1 \dots s_i)$.
\end{theorem}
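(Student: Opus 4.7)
The plan is to quantise the given classical uniform sampler and then install the desired token-level biasing via a controlled rotation, correcting the induced rejection with amplitude amplification. I would proceed in four stages.

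First, I would apply Lemma 2 (Bennett's reversible simulation) to the classical probabilistic sampler, producing a reversible quantum circuit $V$ of size $\BigO(T(n)^{\log_2 3}) \subseteq \BigO(T(n)^{1.6})$ which, acting on $\ket 0$, prepares the uniform superposition $|\Omega|^{-1/2}\sum_{q\in\Omega}\ket{h_q}\ket q$. Here $\ket{h_q}$ holds the now-coherent random bits and workspace used to certify $q\in\Omega$; this accounts for the $T(n)^{1.6}$ factor in the claimed runtime.

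Second, I would interleave into the sampling the computation of the target weight $p_q=\prod_i p_{i,q_i}$, so that the state becomes $|\Omega|^{-1/2}\sum_q\ket{h_q}\ket{p_q}\ket q$ with $p_q$ stored to fixed-point precision $\ell=\BigO(\log(n/\epsilon))$. Each table lookup into the pdf $p_{i,\cdot}$ followed by a reversible multiply-accumulate step contributes $\mathrm{poly}(\ell)$ elementary gates; summed over the $n$ positions and the $\ell$-bit arithmetic, this is where the $n^3$ factor originates. A controlled single-qubit rotation $\ket{p_q}\ket 0\mapsto\ket{p_q}\bigl(\sqrt{p_q/P}\,\ket 1+\sqrt{1-p_q/P}\,\ket 0\bigr)$, for an upper bound $P\ge\max_q p_q$ computed classically from the input pdfs, then places the desired weights on the $\ket 1$ branch, so that projecting on $\ket 1$ yields $\ket\mu$ of \cref{eq:mu} up to normalisation.

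Third, the probability of this post-selection is $P_s=(|\Omega|P)^{-1}\sum_q p_q$, and I would apply (oblivious or fixed-point) amplitude amplification with $\BigO(1/\sqrt{P_s})$ calls to the combined circuit to turn the post-selection into a genuine state-preparation unitary $\op U_{\mu'}$. A triangle-inequality bookkeeping of the three error sources --- truncation of $p_q$ to $\ell$ bits, the resulting rotation-angle error, and residual amplification-angle drift --- then pins $\|\ket{\mu'}-\ket\mu\|\le\epsilon$ for the chosen precision.

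The main obstacle, and where $\kappa$ must enter, is to argue that $1/\sqrt{P_s}$ together with the precision costs can genuinely be charged to $\BigO(\kappa/\epsilon^2)$ rather than blowing up exponentially in $n$. I expect to handle this by writing $P_s$ as a nested expectation along the $n$ sampling steps, estimating each conditional normalisation $\Pr(a\mid s_1\ldots s_i)$ to relative accuracy $\epsilon/n$ by Chebyshev's inequality on $\BigO(\kappa/\epsilon^2)$ classical samples drawn from the uniform sampler, and using those estimates to \emph{pre-calibrate} the controlled rotation so that only an $\epsilon$-sized global correction remains after amplification. The remainder of the argument is routine composition of precision losses, standard amplitude-amplification analysis, and tallying the factors $T(n)^{\log_2 3}$, $n^3$, and $\kappa/\epsilon^2$ into the stated runtime.
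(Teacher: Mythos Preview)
Your stages 1--3 describe a clean ``prepare uniform superposition, write $p_q$, controlled-rotate, amplify'' pipeline, but this architecture cannot meet the stated runtime. The post-selection probability is $P_s=\sum_q p_q /(|\Omega|\,P)$ with $P=\max_q p_q$; in the regime of interest $|\Omega|\sim R^n$ and $P\sim H_R(k)^{-n}$, so $P_s\sim(H_R(k)/R)^n$ is exponentially small and amplitude amplification costs $\Theta((R/H_R(k))^{n/2})$ calls to the sampler, swamping every other factor. Your stage 4 correctly senses the obstacle and reaches for per-step conditional estimates via Chebyshev, but ``pre-calibrating the controlled rotation'' cannot repair a \emph{single global} rotation: any $q$-independent normaliser $P$ must exceed $\max_q p_q$ (else the rotation angle is ill-defined), which forces the exponential $P_s$; and a $q$-dependent normaliser destroys the target amplitudes. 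Relatedly, your attribution of the $n^3$ factor to fixed-point arithmetic is off by polylog versus polynomial: $n$ positions times $\ell=\BigO(\log(n/\epsilon))$-bit multiplies gives only $n\,\mathrm{polylog}$.

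The paper avoids amplification altogether by building $\ket\mu$ \emph{incrementally}. At step $i$, conditioned (in superposition) on the current prefix $s_1\cdots s_i$, it runs the uniform sampler $S$ times coherently, bins the samples by their next token $a$, weights each bin by the future probabilities $p_{j,\cdot}$, and obtains an empirical estimate $d_a\approx\Pr(a\mid s_1\cdots s_i)$; a small unitary $\op W'$ then directly prepares $\sum_a\sqrt{d_a}\,\ket{d_a}\ket a$, which is fed to the parser step $\op U'_i$. No flag qubit, no post-selection. The Chebyshev argument you invoke is exactly what bounds $S$: relative variance $\le\kappa$ and target relative error $\epsilon/n$ per step give $S=\BigO(\kappa n^2/\epsilon^2)$ sampler calls per step, hence $\BigO(\kappa n^3/\epsilon^2)$ over all $n$ steps --- that is the true origin of $n^3\kappa/\epsilon^2$. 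To fix your proof you must abandon the global rotate-and-amplify skeleton and move to this step-by-step coherent construction; once you do, stages 1--3 collapse into the paper's circuit of \cref{fig:sampler} and your stage-4 sampling estimate becomes the actual error analysis rather than a patch.
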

\begin{proof}
Using \cref{lem:rev}, translate the parser---which takes its input step by step---into a sequence of unitaries $\op U=\op U_n\cdots\op U_1$.
Considering a single unitary $\op U_i$ at the $i^\mathrm{th}$ step, it is clear that it can be broken up into a family of unitaries $( \op U_i^a )_{a\in\Sigma}$, such that each $\op U_i^a$ is a specialization of $\op U_i$ when given a fixed input symbol $a\in\Sigma$.
We define $\op V_i^a$ to perform $\op U_i^a$, and in addition store the input $a$ in some ancillary workspace, e.g.\ via $\op V_i^a\ket{\phi}\ket{\xi}=(\op U_i^a\ket\phi)\ket{\xi \oplus a}$.
Then define the block-diagonal unitary $\op V_i := \diag(\op V_i^a)_{a\in\Sigma}$, which acts like a controlled matrix, meaning that if $\op V_i$ acts on some state $\ket{\psi}=\ket{a}\ket{\phi}$, then $\op V_i\ket{\psi} = \ket{a} \op V_i^a\ket{\phi}$.
Naturally this works in superposition as well, e.g.\ $\op V_i(\alpha\ket{a} + \beta\ket{b})\ket{\phi} = \alpha\ket{a}\op V_i^a\ket\phi + \beta\ket{b}\op V_i^b\ket\phi$.
We further assume that the $\op V_0^a$ take as initial state $\ket0\ket{q_0}$.

The final step in augmenting the parser is to extend $\op V_i$ to carry out a controlled multiplication: for a finite set of numbers $F\subset\field R$ (e.g.\ fixed precision), and $d_1,d_2\in F$, we write $\op V_i(d_1)\ket a \ket{d_2} \ket\phi = \ket a\ket{d_1 \times d_2} \op V_i^a\ket \phi$.
We denote this extended unitary for step $i$ with $\op U'_i$.

The next ingredient we take is the classical uniform language sampler. Once again using \cref{lem:rev}, we raise it to a unitary $\op W$,
which takes as input a prefix $s_m:=a_1\cdots a_m$ of the $m$ previously-seen tokens, and a list of distributions over the future weights  $W_m:=(p_{i,j})_{m<j\le n}$. These are the distribution of tokens for each of the $X_j$.
We then augment $\op W$ to a circuit $\op W'$ that quantumly performs the following classical calculations, in superposition over its input:
\begin{enumerate}
    \item Draw $S$ samples uniformly at random from the grammar starting at strings prefixed with $s_m$; denote this list with $B := \{ b_{1},\ldots,b_{S} \}$.
    \item Group the samples $B$ into bins $C_a$ of samples with the same first token $a\in\Sigma$,
    i.e.\ $C_a=\{ b \in B : b = a??\cdots? \}$, where $?$ stands for any token in the alphabet $\Sigma$.
    \item Calculate the total of the probabilities of each bin $C_a$ where each element is weighted with respect to the future probabilities given in list $W_m$, which yields a distribution $D = (d_{a})_{a\in\Sigma}$.
\end{enumerate} 
It is straightforward to write the unitary $\op W'$ that then takes a state $\ket {00}\in\mathcal H_F\otimes \field C^\Sigma$---the first register for storing a number in $F$, and the second for storing a letter---and a list of such weights $D$ to a weighted superposition $\op W'(D)\ket 0 = \sum_{a\in\Sigma} \sqrt{d_{a}}\ket{d_a}\ket{a}$ (where for the sake of simplicity we drop the scratch space register that is certainly required).
Furthermore, we need a controlled unitary $\op Q$ that, given some state $\ket{h}\ket{a}$ where $h=h(a)$ in some specified fashion---which we can demand the $\op V_i^a$ produce---uncomputes $a$ and $d_a$ from the second register, i.e.\ $\op Q\ket h\ket{d_a}\ket a=\ket h\ket{00}$.

Together with the sequence of parser unitaries $\op U_i'$, the overall quantum circuit $\op U_\mu$---depicted in \cref{fig:sampler}---can then be constructed as follows:
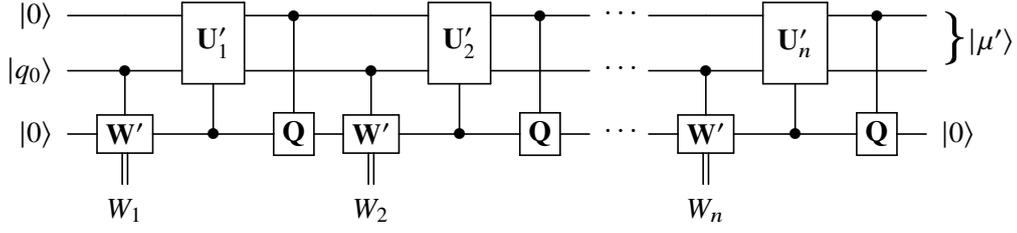
\begin{figure*}[t]
\[
    \Qcircuit @C=1em @R=1em {
    \lstick{\ket0} & \qw & \multigate{1}{\op U'_1} & \ctrl{2} & \qw & \multigate{1}{\op U'_2} & \ctrl{2} & \rstick{\cdots} \qw \\
    \lstick{\ket{q_0}}& \ctrl{1} & \ghost{\op U'_1} & \qw &\ctrl{1} & \ghost{\op U'_2} & \qw & \rstick{\cdots} \qw \\
    \lstick{\ket0}& \gate{\op W'} & \ctrl{-1} & \gate{\op Q} & \gate{\op W'} & \ctrl{-1}& \gate{\op Q} & \rstick{\cdots} \qw \\
    & \dstick{W_1}\cwx[-1] & & & \dstick{W_2}\cwx[-1] & & &
    }
    \hspace{0.78cm}
    \Qcircuit @C=1em @R=1em {
    & \qw & \multigate{1}{\op U'_n} & \ctrl{2} &\rstick{\raisebox{-1.15cm}{\scalebox{2.02}{$\}$}}\raisebox{-1.1cm}{$\ket{\mu'}$}} \qw\\
    & \ctrl{1} & \ghost{\op U'_n} & \qw &\qw \\
    & \gate{\op W'} & \ctrl{-1} & \gate{\op Q} & \rstick{\ket{0}}\qw \\
    & \dstick{W_n}\cwx[-1] & &
    }
    \vspace{.5cm}
\]
\caption{Quantum algorithm to sample from a language according to weights $W_i$, constructed in \cref{th:sampler}.}\label{fig:sampler}
\end{figure*}
For a partial string $s_1s_2\cdots s_i$ of length $i$, we denote the set of all strings in the grammar  prefixed with letters of $s$ with $\mathcal A(s_1\dots s_i)$.
At every step $i$ in the algorithm we sample the expectation value of a future hypothesis continuing with some token $a$, weighted by their individual likelihood $p_{ij}$.
The sampling procedure then yields an empirical distribution $(d_a)_{a\in\Sigma}$, which we denote with
\begin{equation}\label{eq:empirical-f}
    d_a = f^{si}_*(a) = \sum_{j=1}^S \chi\left[ b_j \in \mathcal A(s_1 \dots s_i a) \right] p(b_j) \Bigg/ \sum_{j=1}^S p(b_j),
\end{equation}
where the $S$ sampled hypothesis are given in list $B=\{ b_1,\ldots, b_S \}$ with individual letters $b_j = b_{j,1}  \cdots b_{j,n}$ ). As usual, 
$\chi[\cdot]$ denotes the indicator function, and
\[
    p(b_j) := \prod_{k=1}^n p_{k,b_{jk}}.
\]
Our goal is to show that the algorithm reproduces the desired weight distribution given in \cref{eq:mu}, i.e.
\[\Pr(s)=\prod_{i=0}^{n-1} \Pr(s_{i+1} | s_1\dots s_i) \] where \[\Pr(s_{i+1} | s_{1}\dots s_{i}) = \frac{\sum_{x \in \mathcal{A}(s_1 \dots s_i)}p(x)}{\sum_{x \in \mathcal{A}(s_{1}\dots s_{i+1})}p(x)}\]
To estimate the total probability distribution to error $\epsilon$ in total variation distance, it suffices to approximate each conditional distribution to error $\epsilon/n$, and thus we must show how many samples $S$ are required for $d_a$ to be a good estimator for $\Pr(a|s_1\dots s_i)$. 

First note that $f^{si}(a) = u^{si}(a) / v^{si}$ for
\begin{align*}
    u^{si}(a) &:= \frac1S \sum_{j=1}^S \chi\left[ b_j \in \mathcal A(s_1 \dots s_i a) \right] p(b_j) 
    \quad\text{and}\quad\\
    v^{si} &:= \frac1S \sum_{j=1}^S p(b_j)
    =\sum_{a \in \Sigma} u^{si}(a).
\end{align*}
It is straightforward to calculate that 
\begin{align*}
    \mathbb{E}(u^{si}(a))&=\frac{1}{|\mathcal{A}(s_1\dots s_i)|} \sum_{x \in \mathcal{A} (s_1\dots s_i a)} p(x) 
    \quad \text{ and } \quad\\             
    \mathbb{E}(v^{si})&=\frac{1}{|\mathcal{A}(s_1\dots s_i)|} \sum_{x \in \mathcal{A} (s_1\dots s_i)} p(x) 
\end{align*}
and so $\mathbb{E}(u^{si}(a))/\mathbb{E}(v^{si})= \Pr(a|s_1 \dots s_i)$, the value we are trying to estimate.

Therefore it suffices to take enough samples $S$ such that the $u^{si}(a)$ are close to their mean in relative error (and thus $v^{si}$ is also close in relative error, since $v^{si}=\sum_a u^{si}(a)$).

Noting that $u^{si}(a)=\frac{1}{S}\sum_{j=1}^S Y_j$ for i.i.d. random variables $Y_j$, we have that $\Var(u^{si}(a)) = \frac{1}{S} \Var(Y)$. Therefore by Chebyshev's inequality, to get a $\epsilon/n$ relative error approximation requires the number of samples $S$ to be at least 
\[S \geq \frac{\Var(Y)}{\mathbb{E}(Y)^2(\epsilon/n)^2}.\]

By assumption $\Var(Y)/\mathbb{E}(Y)^2 \leq \kappa$, and so the total number of uses of the sampler over all $n$ steps of the algorithm is $O(\kappa n^3/\epsilon^2)$ as claimed.
\end{proof}

We note that variants of this sampling algorithm are certainly possible: a na\"ive approach would be to just sample from the product-of-powerlaws distribution and postselect on the resulting strings being in the grammar; the performance of this will then depend on the number of strings in the grammar vs.\ the number of all possible strings.
Another method could be to execute the uniform sampler in superposition, and perform amplitude amplification on the resulting quantum state to reintroduce the power-law bias. The number of amplification rounds will again depend on the distribution of the strings in the grammar.

\section{\textsc{Most Likely Parse}: Full Query Bound}\label{app:full-query-bound}
In this section we rigorously prove the integral runtime expression in Lemma 10.
\begin{figure}
    \hspace{-2cm}
    \begin{minipage}{18cm}
    \includegraphics[width=5.5cm]{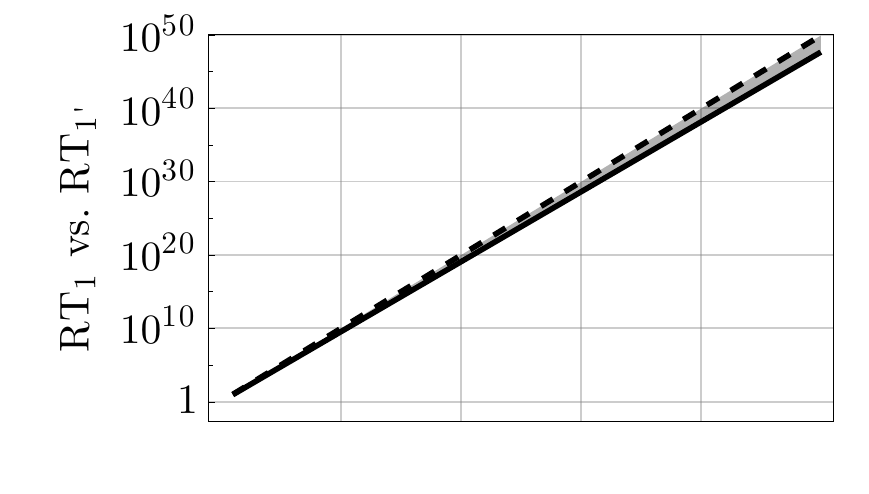}
    \nolinebreak\hspace*{-1.5cm}\nolinebreak
    \includegraphics[width=5.5cm]{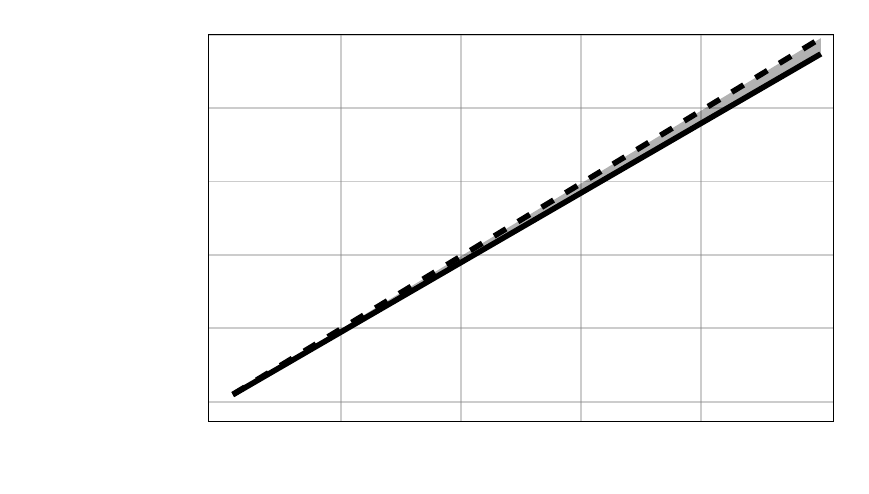}
    \nolinebreak\hspace*{-1.5cm}\nolinebreak
    \includegraphics[width=5.5cm]{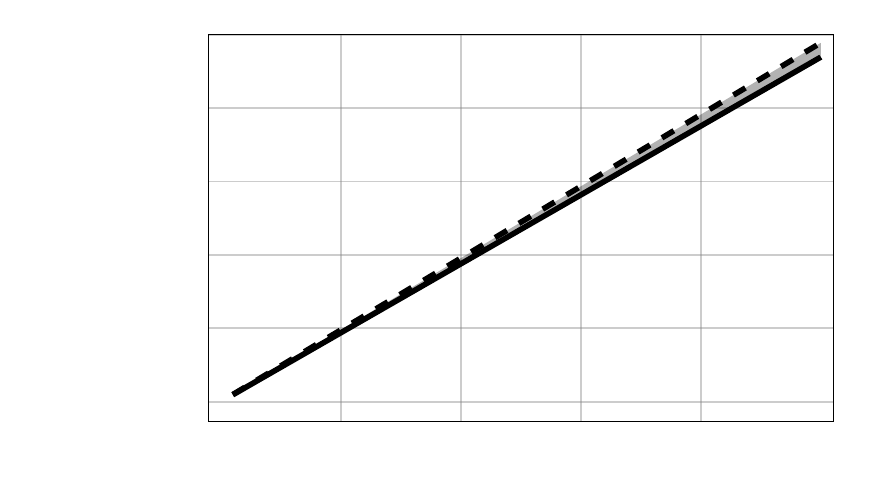}
    \nolinebreak\hspace*{-1.5cm}\nolinebreak
    \includegraphics[width=5.5cm]{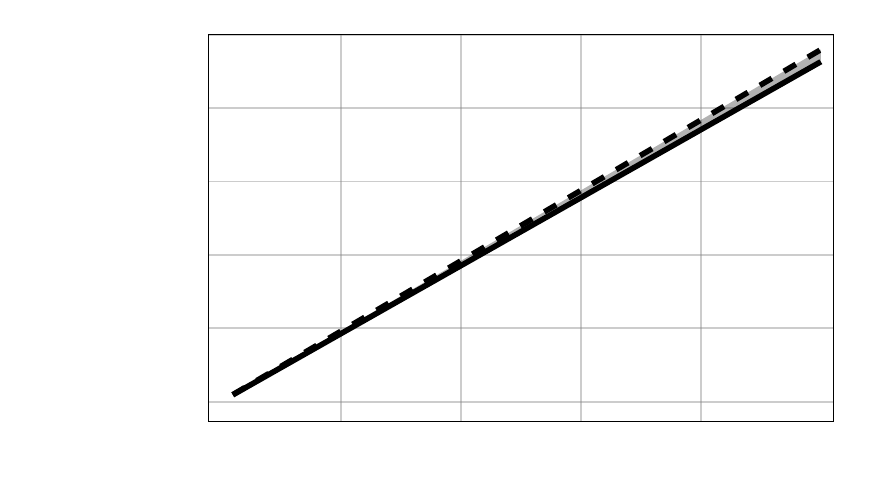}\\[-5mm]
    \includegraphics[width=5.5cm]{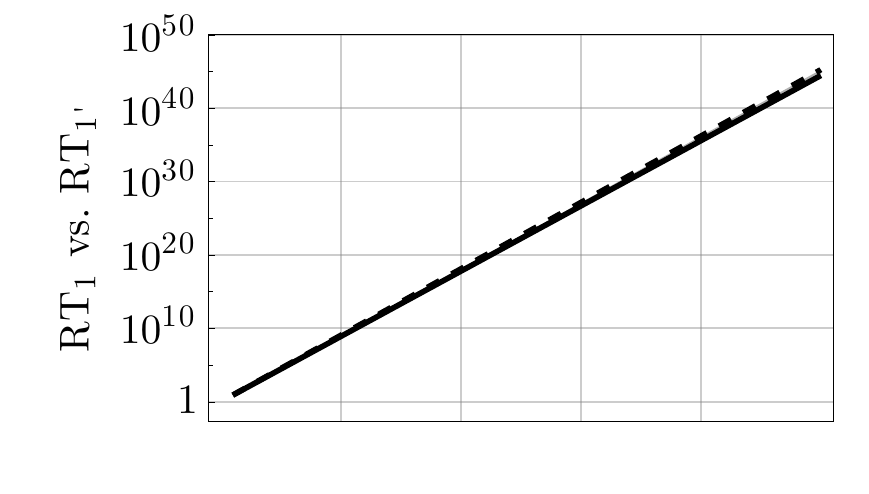}
    \nolinebreak\hspace*{-1.5cm}\nolinebreak
    \includegraphics[width=5.5cm]{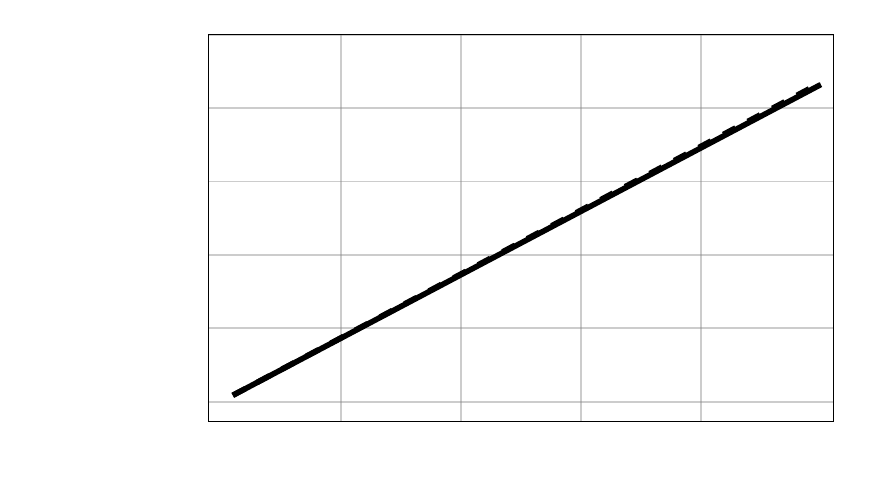}
    \nolinebreak\hspace*{-1.5cm}\nolinebreak
    \includegraphics[width=5.5cm]{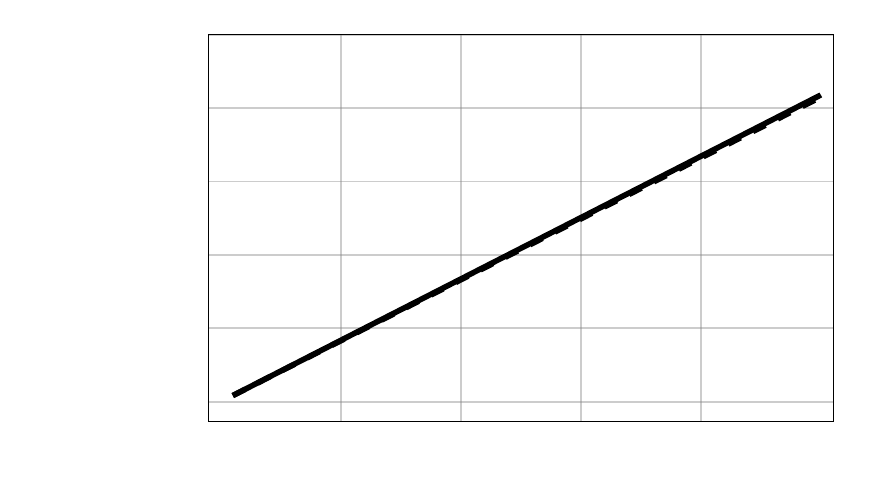}
    \nolinebreak\hspace*{-1.5cm}\nolinebreak
    \includegraphics[width=5.5cm]{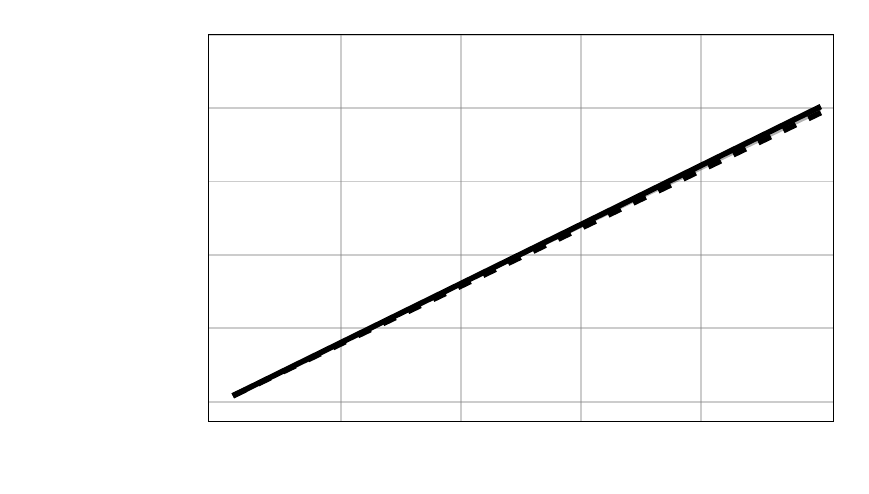}\\[-5mm]
    \includegraphics[width=5.5cm]{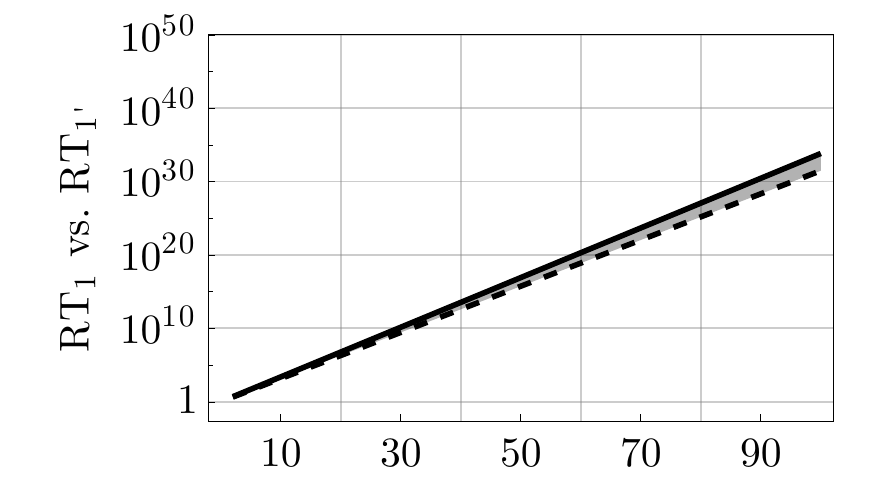}
    \nolinebreak\hspace*{-1.5cm}\nolinebreak
    \includegraphics[width=5.5cm]{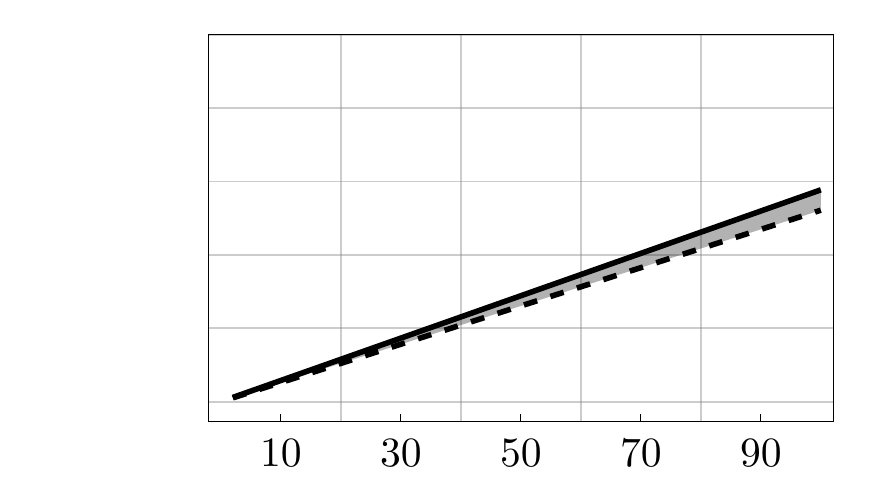}
    \nolinebreak\hspace*{-1.5cm}\nolinebreak
    \includegraphics[width=5.5cm]{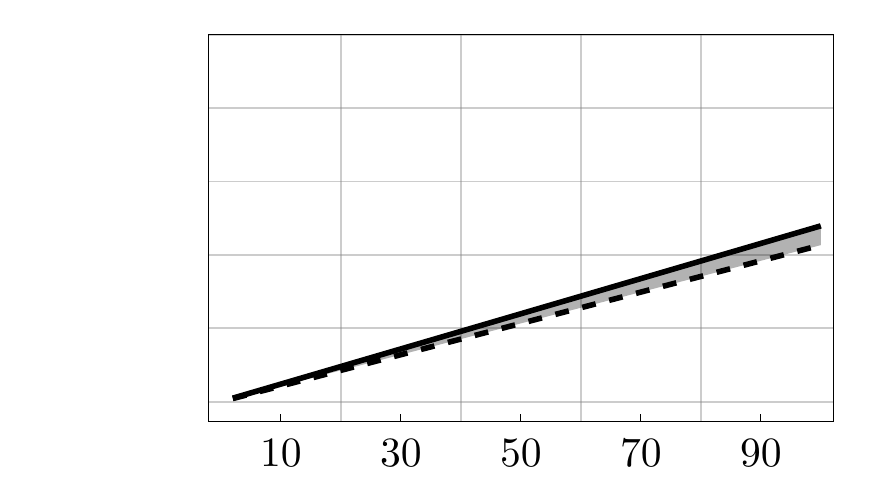}
    \nolinebreak\hspace*{-1.5cm}\nolinebreak
    \includegraphics[width=5.5cm]{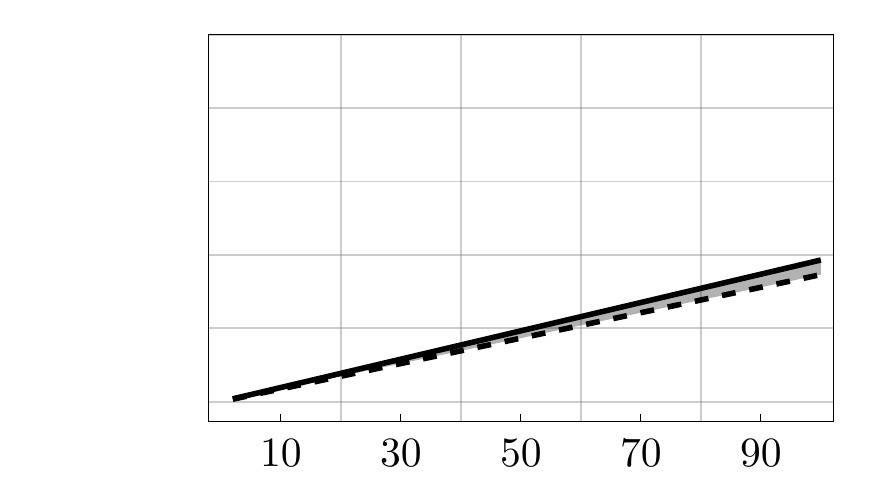}
    \end{minipage}
    \caption{Expected runtime $\mathrm{RT_1}(R,k,n)$ as evaluated for $R=10$ and various $k$ (top row: $k\in\{0.2, 0.4, 0.6, 0.8\}$, middle row: $k\in\{1.2,1.4,1.6,1.8\}$, bottom row: $k\in\{2.5,3,3.5,4\}$, always from left to right), vs.\ the same parameters used for $\mathrm{RT}_{1'}(R,k,n)$ (dashed line), where the discrete probabilities from the power law are approximated with a continuous Pareto distribution. On the $x$-axis is the length of the input sequence $n$.}
    \label{fig:RT1-vs-RT1'}
\end{figure}

As a reminder, the type of integral from \cref{eq:integral-1} we are interested in takes the form
\[
    M(R,k_1,k_2,c,n) := \frac{1}{h_R^n(k_1)}\iiint_1^R\frac{\chi(r_1\cdots r_n\le c)}{(r_1\cdots r_n)^{k_2}} \dd r_1\cdots \dd r_n, 
\]
where $k_1$ is not necessarily equal to $k_2$, and typically $c=(R/h_R(k_1))^{n/k_1}$. Here,
$\chi(\cdot)$ denotes the characteristic function of a set, i.e.\ it takes the value 1 where the premise is true, and 0 otherwise.
It is possible to integrate \cref{eq:integral-1} numerically for small $n$; however, due to the high dimensionality and the flat tail, convergence suffers drastically already for $n>6$.
Similarly, evaluating the integral with a computer algebra system takes significant time for larger $n$ and produces ever growing expressions that are hard to handle, as the reader is welcome to verify.
To address this problem, we derive the closed-form expression from \cref{lem:integral}:
\setcounter{theorem}{9}
\begin{lemma}
For $k\neq 1$, \cref{eq:integral-1} becomes
\[
    M(R,k_1,k_2,c,n) = \frac{(-1)^n}{k'^n h_R^n(k_1)}\!\!\! \sum_{j=0}^{\min\{n,\lfloor c'/a'\rfloor\}}
    \binom nj \left(
    \ee^{a'k'j} - \ee^{-c'k'}\sum_{l=0}^{n-1}\frac{(a'k'j-c'k')^l}{l!}
    \right),
\]
where $k'=1-k_2$, $c'=\log c$, $a'=\log R$.
\end{lemma}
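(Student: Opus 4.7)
The plan is to linearise the problem by the substitution $u_i = \log r_i$, which converts the product into a sum. Under this change of variables the integration box $[1,R]^n$ maps to $[0,a']^n$ with $a'=\log R$, the constraint $r_1\cdots r_n \le c$ becomes $u_1+\cdots+u_n \le c'$ with $c'=\log c$, the factor $(r_1\cdots r_n)^{-k_2}$ combines with the Jacobian $\prod dr_i = \prod e^{u_i}\,du_i$ to give $e^{k'\sum u_i}\,du$ where $k'=1-k_2$. Thus, up to the normalisation $h_R^n(k_1)$, the task reduces to evaluating
\[
  I := \int_{[0,a']^n} \chi\!\left(\textstyle\sum_i u_i \le c'\right)\,e^{k'\sum_i u_i}\,du.
\]

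Next I would push the sum constraint into the distribution of $s := \sum_i u_i$. Since each $u_i/a'$ is uniform on $[0,1]$, the variable $s/a'$ has the Irwin--Hall density, giving the closed form
\[
  f_s(x) = \frac{1}{(n-1)!\,a'^{n}}\sum_{j=0}^{\lfloor x/a'\rfloor}(-1)^j\binom{n}{j}(x-ja')^{n-1}, \qquad x\in[0,na'].
\]
Then $I = a'^n\int_0^{c'} e^{k'x} f_s(x)\,dx$, and swapping sum with integral (permissible since the sum over $j$ is finite for each $x$) reduces $I$ to a linear combination over $j\in\{0,\dots,\min(n,\lfloor c'/a'\rfloor)\}$ of one-dimensional integrals. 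After the shift $y = x-ja'$ each of these takes the form $e^{k'ja'}\int_0^{c'-ja'} y^{n-1} e^{k'y}\,dy$.

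The remaining ingredient is the antiderivative $J_n(T):=\int_0^T y^{n-1}e^{k'y}\,dy$. Since $k'\ne 0$ (by the hypothesis $k_2\ne 1$), $n$ integrations by parts — or equivalently looking for a primitive of the form $P(y)e^{k'y}$ with $P$ a polynomial of degree $n-1$ and solving the recursion for its coefficients — yields
\[
  J_n(T) = \frac{(-1)^n(n-1)!}{k'^n}\Bigl(1 - e^{k'T}\sum_{l=0}^{n-1}\frac{(-k'T)^l}{l!}\Bigr).
\]
Substituting $T=c'-ja'$ and absorbing $e^{k'ja'}\cdot e^{k'(c'-ja')}=e^{k'c'}$ into the exponential prefactor, together with the identity $(-k'(c'-ja'))^l = (a'k'j-c'k')^l$, collects the terms into the form displayed in \cref{lem:integral}, with the factor $1/(n-1)!$ from the Irwin--Hall density cancelling against $(n-1)!$ from $J_n$.

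The only nontrivial bookkeeping is the inclusion--exclusion cutoff: one must check that the summation upper limit $\min\{n,\lfloor c'/a'\rfloor\}$ is correct, because for larger $j$ either the Irwin--Hall support vanishes (so the inner integral is empty) or $\binom{n}{j}=0$. This is the main obstacle to getting the indices right; once it is handled, everything else is routine elementary calculus.
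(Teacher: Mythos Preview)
Your approach is correct and genuinely different from the paper's. After the same logarithmic substitution, the paper proceeds by taking the Fourier transform of the indicator $\chi(\bar z\le c')$, factoring the box integral into a product, and then evaluating the resulting one-dimensional integral by contour integration and the residue theorem; this forces a case split according to whether the pole $-\ii k'$ lies in the upper or lower half plane (i.e.\ $k_2\gtrless 1$), and the two resulting expressions are reconciled only after invoking an identity involving Stirling numbers of the second kind. Your route---recognising that $s=\sum_i u_i$ has a rescaled Irwin--Hall density and then integrating $y^{n-1}\ee^{k'y}$ by parts (equivalently, via the lower incomplete gamma function)---is entirely real-variable and elementary, treats both sign cases of $k'$ uniformly, and makes the inclusion--exclusion structure and the cutoff $\min\{n,\lfloor c'/a'\rfloor\}$ transparent from the outset. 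The paper's Fourier route is perhaps more systematic if one did not already know the Irwin--Hall formula, but yours is shorter and requires no complex analysis.

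One small caution on the final ``collecting terms'' step: your computation naturally produces a factor $(-1)^j$ in the sum and an $\ee^{c'k'}$ (not $\ee^{-c'k'}$) in front of the truncated exponential series; a quick check at $n=1$ confirms these signs. So when matching against the displayed formula you should expect to find (and flag) apparent sign discrepancies in the statement rather than force your expression to agree with it.
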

\begin{proof}
As a first step, we perform a log substitution $z_i = \log r_i$, $\ee^{z_i}\dd z_i = \dd r_i$ which yields
\[
    M(R,k_1,k_2,c,n)=\frac{1}{h_R^n(k_1)} \iiint_0^{\log R} \ee^{(1-k_2)(z_1+\ldots+z_n)}\chi(z_1+\ldots+z_n\le\log c)\dd z_1\cdots \dd z_n.
\]
The characteristic function is now supported on a rescaled unit simplex, and writing $\bar z:=\sum_i z_i$ we can take its Fourier transform
\begin{align*}
    \mathcal F_t\chi(\bar z\le c')
    &=\frac{1}{\sqrt{2\pi}} \int_{\field R}\chi(\bar z\le c')\ee^{\ii \bar z t}\dd\bar z \\
    &=\frac{1}{\sqrt{2\pi}} \int_{-\infty}^{c'} \ee^{\ii \bar z t}\dd\bar z \\
    &=\frac{\pi}{2}\delta(t) + \frac{\ee^{\ii c' t}}{\sqrt{2\pi}\ii t} \\
    &=: \tilde \chi_{c'}(t).
\end{align*}
We of course have $\mathcal F_{\bar z}^{-1}\mathcal F_t \chi \equiv \chi$.
Then
\begin{align}
    h_R^n(k_1) M(R,k_1,k_2,c,n)
    &= \iiint_0^{a'} \ee^{k'\bar z}\chi(\bar z \le c')\dd z_1\cdots \dd z_n \nonumber\\
    &= \iiint_0^{a'} \ee^{k'\bar z}\int_{\field R} \ee^{-\ii t \bar z} \frac{\tilde \chi_{c'}(t)}{\sqrt{2\pi}} \dd t\ \dd z_1\cdots \dd z_n \nonumber\\
    &\overset*= \frac{1}{\sqrt{2\pi}} \int_{\field R} \tilde \chi_{c'}(t) \left(\prod_{l=1}^n
        \int_0^{a'} \ee^{(k'-\ii t)z_l} \dd z_l
        \right) \dd t \nonumber\\
    &= \frac{1}{\sqrt{2\pi}} \int_{\field R} \tilde\chi_{c'}(t) \left(
        \frac{\ee^{a'(k'-\ii t)}-1}{k'-\ii t}
        \right)^{\!\!n} \dd t \nonumber\\
    &= \frac{(-1)^n}{\sqrt{2\pi}} \int_{\field R} \sum_{j=0}^n \binom nj (-1)^j \frac{1}{(k'-\ii t)^n} \ee^{a'(k'-\ii t)j} \tilde\chi_{c'}(t)\dd t \nonumber\\
    &= \frac{1}{\sqrt{2\pi}} \sum_{j=0}^n \binom nj \ee^{a'k'j}\frac{(-1)^j}{\ii^n}
    \underbrace{\int_{\field R} \frac{\ee^{-\ii a' j t}}{(t+\ii k')^n} \tilde\chi_{c'}(t) \dd t}_{=:J_n}.
    \label{eq:M1}
\end{align}
In the step marked with $*$, we applied Fubini's theorem, for which we implicitly assumed a smooth limiting argument for the step function.
To evaluate the integral $J_n$, we observe that the denominator has a root of order $n$ at
\[
    t_0 = -\ii k' = \begin{cases}
    +\ii |k'| & k>1 \Leftrightarrow k'<0 \\
    -\ii |k'| & k<1 \Leftrightarrow k'>0 \\
    0 & k=1 \Leftrightarrow k'=0.
    \end{cases}
\]
We further expand the Fourier-transformed characteristic function---and again glossing over the details of Fubini's theorem to swap the integration order---to obtain
\begin{equation}\label{eq:Jn}
J_n = \frac{1}{\sqrt{2\pi}} \int_{-\infty}^{c'} \int_{\field R} \frac{\ee^{\ii t(x-ja')}}{(t+\ii k')^n} \dd t\ \dd x.
\end{equation}
We handle the integrand's three pole cases separately.
\paragraph{$\mathbf{k>1}$.}
We have $k'<0$ and an order $n$ pole at $\ii |k'|$; the integrand $g(t):=\ee^{\ii t(x-ja')}/(t+\ii k')^n$ is holomorphic in the lower half plane.
The exponent of the exponential, $x-ja'$, assumes signs
\[
    x-ja' \begin{cases}
    >0 & x>ja'\\
    <0 & x<ja'\\
    =0 & x=ja'.
    \end{cases}
\]
In the latter case, the integral (over $t$) evaluates to zero.

In the middle case, for $t=-\ii s$ we have $\exp(\ii(-\ii)s(x-ja'))=\exp(s(x-ja'))\longrightarrow 0$ as $s\longrightarrow\infty$; by Jordan's lemma we can thus write
\[
    \int_{\field R}g(t) \dd t = \lim_{r\rightarrow\infty}\oint_{\gamma_1(r)} g(t) \dd t = 0,
\]
where $\gamma_1(r)$ contains the real interval $[-r,r]$ and a half circle connecting the end points in the lower half complex plane.

In the first case, for $t=\ii s$, we have $\exp(\ii^2 s(x-ja'))=\exp(-s(x-ja'))\longrightarrow 0$ as $s\longrightarrow\infty$; however now the corresponding upper half plane loop encircles the pole of $g(x)$.
We apply the residue theorem for a flipped path $\gamma_2(r)=-\gamma_1(r)$:
\begin{align*}
\int_{\field R}\frac{\ee^{\ii t(x-ja')}}{(t+\ii k')^n} \dd t 
&= \lim_{r\rightarrow\infty}\oint_{\gamma_2(r)} \frac{\ee^{\ii t(x-ja')}}{(t+\ii k')^n} \dd t \\
&= 2\pi\ii \mathrm{Res}_t(g,t_0) \\
&= \frac{2\pi\ii}{(n-1)!} \lim_{t\rightarrow t_0} \frac{\dd^{n-1}}{\dd t^{n-1}} ((t-t_0)^n g(t)) \\
&= \frac{2\pi\ii}{(n-1)!} \lim_{t\rightarrow t_0}(\ii(x-ja'))^{n-1}\ee^{\ii t(x-ja')} \\
&= \frac{2\pi\ii^n}{(n-1)!} (x-ja')^{n-1}\ee^{k'(x-ja')}.
\end{align*}
For the case $x-ja'<0$ we are left to perform the outer integration in \cref{eq:Jn}.
If $c'\le ja'$ we necessarily have $x\le ja'$ and $J_n=0$.
For the case $c'>ja'$ we have
\begin{align*}
    \frac{1}{\sqrt{2\pi}} \int_{-\infty}^{c'} \frac{2\pi\ii^n}{(n-1)!} (x-ja')^{n-1}\ee^{k'(x-ja')}\dd x
    &= \frac{\sqrt{2\pi}\ii^n}{(n-1)!} \int_0^{c'-ja'} y^{n-1}\ee^{k'y}\dd y \\
    &= \frac{\sqrt{2\pi}\ii^n}{(n-1)!}\frac{1}{(-k')^n}(\Gamma(n) - \Gamma(n,a'k'j-c'k')) \\
    &= \frac{\sqrt{2\pi}\ii^n}{(-k')^n}\left(1-\frac{\Gamma(n,a'k'j-c'k')}{\Gamma(n)} \right),
\end{align*}
Where $\Gamma(n,\cdot)$ is the lower incomplete gamma function.
Putting it all together, we get
\begin{align*}
    J_n 
    = \frac{1}{\sqrt{2\pi}} \int_{-\infty}^{c'} \int_{\field R} \frac{\ee^{\ii t(x-ja')}}{(t+\ii k')^n}\dd t\ \dd x
    =  \frac{\sqrt{2\pi}\ii^n}{(-k')^n}\begin{cases}
    0 & c' \le ja' \\
    1-\frac{\Gamma(n,a'k'j-c'k')}{\Gamma(n)} & \text{otherwise.}
    \end{cases}
\end{align*}
Finally, we insert the last expression back into \cref{eq:M1}, and obtain
\begin{align*}
    M(R,k_1,k_2,c,n)
    &= \frac{1}{h_R^n(k_1)} \frac{1}{\sqrt{2\pi}} \sum_{j=0}^n \binom nj \ee^{a'k'j}\frac{(-1)^j}{\ii^n} J_n \\
    &= \frac{(-1)^n}{k'^n h_R^n(k_1)}\!\!\! \sum_{j=0}^{\min\{ n, \lfloor c'/a' \rfloor \}} \binom nj \ee^{a'k'j}\left( 1-\frac{\Gamma(n,a'k'j-c'k')}{\Gamma(n)} \right) .
\end{align*}
The second term in the sum we can further simplify using the identity $\Gamma(n,x)/\Gamma(n)=\ee^{-x}\sum_{l=0}^{n-1}x^l/l!$ which holds for integer $j$, which yields
\[
    M(R,k_1,k_2,c,n) = 
    \frac{(-1)^n}{k'^n h_R^n(k_1)}\!\!\! \sum_{j=0}^{\min\{ n, \lfloor c'/a' \rfloor \}} \binom nj \left(
    \ee^{a'k'j} - \ee^{-c'k'}\sum_{l=0}^{n-1}\frac{(a'k'j-c'k')^l}{l!}
    \right).
\]

\paragraph{$\mathbf{k<1}$.}
We have $k'>0$ and the order $n$ pole of \cref{eq:Jn} lies at $-\ii|k'|$.
The integrand $g(t)=\ee^{\ii t(x-ja')}/(t+\ii k')^n$ is holomorphic in the upper half plane; and analogous to before, this time when $x-ja'>0$, we have
\[
    \int_{\field R} g(t)\dd t=\lim_{r\rightarrow \infty}\oint_{\gamma_2(r)} g(t) \dd t = 0.
\]
In the opposite case we can again apply the residue theorem and obtain
\[
    \int_{\field R} g(t)\dd t \overset*= -2\pi\ii \mathrm{Res}_t(g,t_0) = -\frac{2\pi\ii^n}{(n-1)!}(x-ja')^{n-1}\ee^{k'(x-ja')},
\]
where the negative sign in step $*$ stems from the clockwise orientation of the contour $\gamma_2$.
The outer integration in \cref{eq:Jn} is now
\begin{align*}
    J_n &= -\frac{1}{\sqrt{2\pi}} \int_{-\infty}^{c'} \frac{2\pi\ii^n}{(n-1)!} \begin{cases}
        (x-ja')^{n-1} \ee^{k'(x-ja')} & x<ja' \\
        0 & \text{otherwise}
    \end{cases}\dd x\\
    &= -\frac{\sqrt{2\pi}\ii^n}{(n-1)!} \int_{-\infty}^{\min\{c',ja'\}} (x-ja')^{n-1}\ee^{k'(x-ja')}\dd x \\
    &= -\frac{\sqrt{2\pi}\ii^n}{(n-1)!} \int_{-\infty}^{\min\{c'-ja,0\}} y^{n-1}\ee^{k'y} \dd y \\
    &= -\frac{\sqrt{2\pi}\ii^n}{(n-1)!} (-1)^{n+1} k'^{-n} \Gamma(n, -k'\min\{c'-ja',0\}) \\
    &= \frac{\sqrt{2\pi}\ii^n}{(-k')^n} \frac{\Gamma(n,\min\{a'k'j-c'k',0\})}{\Gamma(n)} \\
    &= \frac{\sqrt{2\pi}\ii^n}{(-k')^n} \begin{cases}
    1 & c' \ge ja' \\
    \frac{\Gamma(n,a'k'j-c'k')}{\Gamma(n)} & \text{otherwise.}
    \end{cases}
\end{align*}
Inserting the expression back into \cref{eq:M1} we obtain
\begin{align*}
    M(R,&k_1,k_2,c,n)
    = \frac{1}{h_R^n(k_1)} \frac{1}{\sqrt{2\pi}} \sum_{j=0}^n \binom nj \ee^{a'k'j}\frac{(-1)^j}{\ii^n} J_n \\
    &= \frac{(-1)^n}{k'^n h_R^n(k_1)} \Bigg[\
    \sum_{j=0}^{\min\{n,\lfloor c'/a' \rfloor\}} \binom nj (-1)^j\ee^{a'k'j}\ \  + \\
    &\hspace{2.73cm}\sum_{j=\lfloor c'/a'\rfloor+1}^n \sum_{l=0}^{n-1} \binom nj (-1)^j\ee^{c'k'}\frac{(a'k'j-c'k')^l}{l!}
    \ \Bigg].
\end{align*}
To reduce the last sum to the previous expression, we note that
\begin{align*}
    \sum_{j=0}^n\sum_{l=0}^{n-1} \binom nj (-1)^j \frac{(xj-y)^l}{l!}
    &= \sum_{l=0}^{n-1} \frac{x^l}{l!} \sum_{j=0}^n \binom nj (-1)^j \sum_{m=0}^l \binom lm j^m \left(-\frac yx\right)^{l-m} \\
    &= \sum_{l=0}^{n-1} \frac{x^l}{l!} \sum_{m=0}^l \binom lm \left( -\frac yx \right)^{l-m} \underbrace{\sum_{j=0}^n \binom nj (-1)^j j^m}_{=(-1)^n n! S_m^{(n)}},
\end{align*}
where $S_m^{(n)}$ is the Stirling number of the second kind, which denotes the number of ways to partition a set of size $m$ into $n$ non-empty subsets.
Since $m\le l\le n-1$, $S_m^{(n)}\equiv 0$ here,
and thus
\[
 \sum_{l=0}^{n-1} \sum_{j=\lfloor c'/a'\rfloor+1}^n \binom nj (-1)^j\frac{(a'k'j-c'k')^l}{l!}
 = -\sum_{l=0}^{n-1} \sum_{j=0}^{\min\{n,\lfloor c'/a' \rfloor\}} \binom nj (-1)^j\frac{(a'k'j-c'k')^l}{l!}.
\]
The claim follows.
\end{proof}
We leave the $k=1$ case as an exercise to the reader.

With \cref{lem:integral}, we can now evaluate the terms in \cref{eq:full-runtime-expectation} efficiently.
The first term is
\begin{align}
    \mathrm{rt} = \sqrt{h_R^n(k)}M\!\!&\left[R,k,\frac k2,\left(\frac{R}{h_R(k)}\right)^{\!\frac nk},n\right],
    \label{eq:rt}
\intertext{and the second}
    \mathrm{rt}' = 1-M\!\!&\left[R,k,k,\left(\frac{R}{h_R(k)}\right)^{\!\frac nk},n\right].
\end{align}
Of interest is whether taking this full expectation value and splitting it to fall back to Grover search whenever the probability dips below $1/R^n$ yields a significant improvement of the runtime bound.
We found this to not be the case, as \cref{fig:rt-no-better} demonstrates; while for smaller $n$ there is a significant improvement, as $n$ grows the ratio $\mathrm{rt}/\mathrm{RT}_1\longrightarrow1$ exponentially fast.

\begin{figure}
    \centering
    \includegraphics[width=.32\columnwidth]{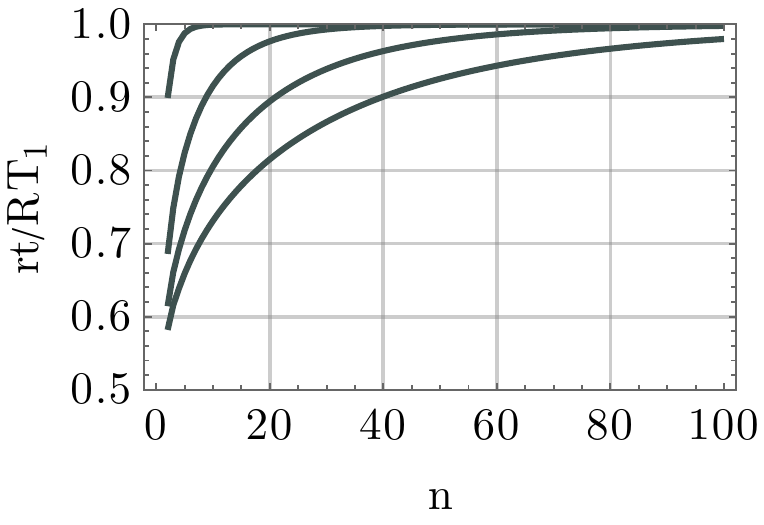}
    \hfill
    \includegraphics[width=.32\columnwidth]{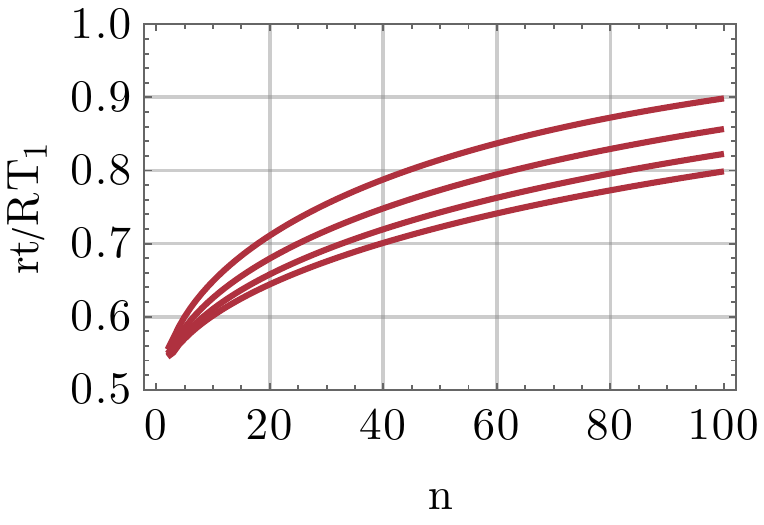}
    \hfill
    \includegraphics[width=.32\columnwidth]{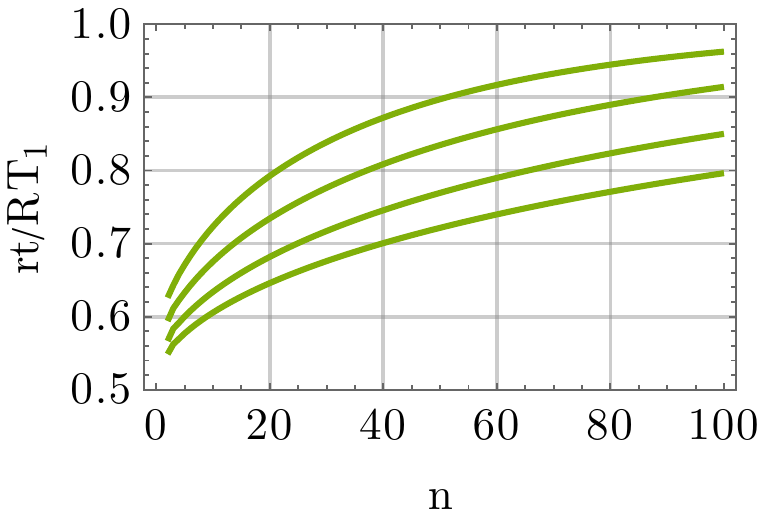}
    \caption{Ratios of $\mathrm{rt}/\mathrm{RT}_1$ for various power law exponents $k$. left: $\{0.2, 0.4, 0.6, 0.8\}$ from top to bottom, middle: $\{1.2, 1.4, 1.6, 1.8\}$ from top to bottom, right: $\{2.5, 3.0, 3.5, 4.0\}$ from bottom to top. In all cases the runtime ratos approach $1$ exponentially fast with growing $n$.}
    \label{fig:rt-no-better}
\end{figure}

\section{Postselected Product of Powerlaws}\label{sec:powerlaw-input-appdx}
In this appendix we answer the open question left in \cref{sec:powerlaw-input}.
The setup here is as follows.
Let $S>1$, and $X$ be distributed according to a product of distributions with pdf
\[
    p(r_1,\ldots,r_n) := \frac{1}{H_{S}(k)^n} \frac{1}{(r_1\cdots r_n)^k}
\]
as in \cref{eq:cartesian-product-of-powerlaws}, i.e.\ where every factor is a power law distribution over $S$ elements. If we remove a random subset of the elements such that $R^n$ (for some $R<S$) elements are left over, is the resulting probability distribution a product-of-powerlaws, where every factor is over $R$ elements?

In the continuous case this can be seen as follows.
If $X\sim\mathrm{Pareto}(k,S)$ with pdf $p$ as defined in \cref{sec:powerlaw-input}, then removing a random subset of elements on the interval $[1,S+1]$ is equivalent to taking a random characteristic function $\chi_1$ over it, with $\int_{[1,S+1]} \chi(r) \dd r=R$, and defining $X'$ with pdf $S p(r)\chi(r)/R$. We define the postselected random variable $Y$ over $[1,R]$ by relabelling the points in $\supp \chi$ by values in $[1,R]$ in an order-preserving fashion.

Similarly, if $X_n$ is a product of $n$ iid Pareto random variables with pdf $p_n$, then postselection means taking a random characteristic function $\chi_n$ on $[1,S]^n$ with 
\[ 
\int_{[1,S+1]^n} \chi_n(r_1,\ldots,r_n) \dd r_1 \cdots \dd r_n=R^n.
\]
We claim that the resulting random variable $X'_n$ with pdf $S^n p_n(r) \chi_n(r) / R^n$ then factors into a product distribution.
This holds because $\chi_n$ has the property that for all $\epsilon > 0$ there exists a bijection $f$ such that for almost all $(x_1,\ldots,x_n) \in \supp \chi_n$, there exists
\[
(y_1,\ldots,y_n)=f(x_1,\ldots,x_n) \in \left(\supp\chi_1\right)^n 
\quad\text{s.t.}\quad
\sum_{i=1}^n | x_i - y_i | < \epsilon,
\]
for some characteristic function $\chi_1$ defined on $[1,S]$. We refer to this property as $\chi_n$ being `product'.

We now prove this claim by induction on $n$. For $n=1$, X and $\chi_1$ are already product, so there is nothing to show.
Assume the hypothesis holds for $\chi_n$ which can be factored into a product $\chi_1^n$ for some $\chi_1$.
Take a random characteristic function $\chi_{n+1}$ over $n+1$ dimensions.
Let $\epsilon > 0$.
As $\field R$ is uncountable, we take a $\delta$-net over the interval $I:=[1,S+1]$ for some small $\epsilon \gg \delta>0$, which we will denote with $I_\delta$; each $x \in I$ then has a corresponding $x' \in I_\delta$ that satisfies $|x-x'| < \delta$.
In particular, $I_\delta$ is countable.
In a similar fashion, for $\chi_{n+1}$ we consider its discretized variant over $I_\delta^n$ as $\chi'_{n+1}$.

So let $(x_1,\ldots,x_n,x_{n+1}) \in \supp\chi'_{n+1}$, and analogously define the discretized characteristic functions $\chi'_n$ and $\chi'_1$.
A counting argument shows that within each $\epsilon$-bin (defined over $I$ and extended over to $I_\delta^n$ accordingly), we can map $(x_1,\ldots,x_{n+1})$ to their closest corresponding point $(y_1,\ldots,y_n,z_1)\in \chi'_n\times \chi'_1$---or if that point was previously chosen its next- and next-to-next-closest one etc., while staying within $\epsilon$ distance for each original coordinate for the majority of the points.

A limiting argument $\epsilon \longrightarrow 0$ shows that this map can be constructed for almost all points. This concludes the induction.

The question that remains is what distribution $Y$ follows. Despite scale invariance of Pareto distributions, the resulting pdf for a surviving fraction $\lambda$ of the original points looks like $\tilde p(r) = p(1+(x-1)\lambda)$, which is itself not a Pareto distribution.
Yet, since we actually work with a power law distribution, we already answered in \cref{sec:powerlaw-input} what this resulting sample distribution over $R$ looks like: it can be well-approximated by a power law with a slightly worse falloff $k'<k$ that itself can be estimated numerically in a straightforward fashion.
The smaller exponent should also account for any approximation errors made by the continuous variable analysis demonstrated in this section.

\section{Beam Search Variants}
Continuing from Sec.~5 from the main text, the relevant questions to ask here is what choice of $p_0$ will
\begin{enumerate}
    \item only require a constant---or logarithmic---number of rounds of amplitude amplification,
    \item retain a large number of hyptheses, and
    \item improve runtime for the post-amplified \textsc{QuantumSearchDecode} variant.
\end{enumerate}

We address all these questions in the next sections.

\subsection{Constant Post-Amplification}\label{sec:constant-post-amp}
In light of simplicity, we will take $\mathrm{RT}_1$ as an upper runtime bound to the full expected number of rounds, $\mathrm{RT}_2$; as we amplify away all paths with weights below the cutoff we never expect to find an element therein---meaning we can drop the fallback to Grover search in our analysis, and treat the search as if the advice state was purely on those paths with weight $\ge p_0$.

We first address the question for which choice of $p_0$ the cumulative leftover probability $M(R,k,k,p_0,n)$ can be lower-bounded by a quantity independent of $n$, which means we have to perform only a \emph{constant} number of amplitude amplification rounds on the advice state.
In order to do so, we solve the implicit inequality
\newcommand\fsplit{f_\mathrm{split}}
\begin{equation}\label{eq:fsplit}
    \text{minimize $\fsplit$ subject to\ } M\Bigg[R,k,k,\underbrace{\left(\frac{R}{h_R(k)}\right)^{\!\frac nk \fsplit}}_{=p_0},n\Bigg] \ge C_0.
\end{equation}
As $M$ is monotonically decreasing for a decreasing splitting exponent $\fsplit$, and since $M$ can be computed in $\BigO(n^2)$ many arithmetic operations, we can perform the minimization efficiently.
For a choice of $C_0=1/4$ (which implies a single amplitude amplification round) and $C_0=1/100$ (ten rounds of amplification) we plot $\fsplit$ in \cref{fig:constant-post-amp-f}.
As can be seen, $\fsplit$ tends towards a limiting value $\in(0,1)$ for $n\longrightarrow\infty$.

\begin{figure}
    \centering
    \includegraphics[width=.48\columnwidth]{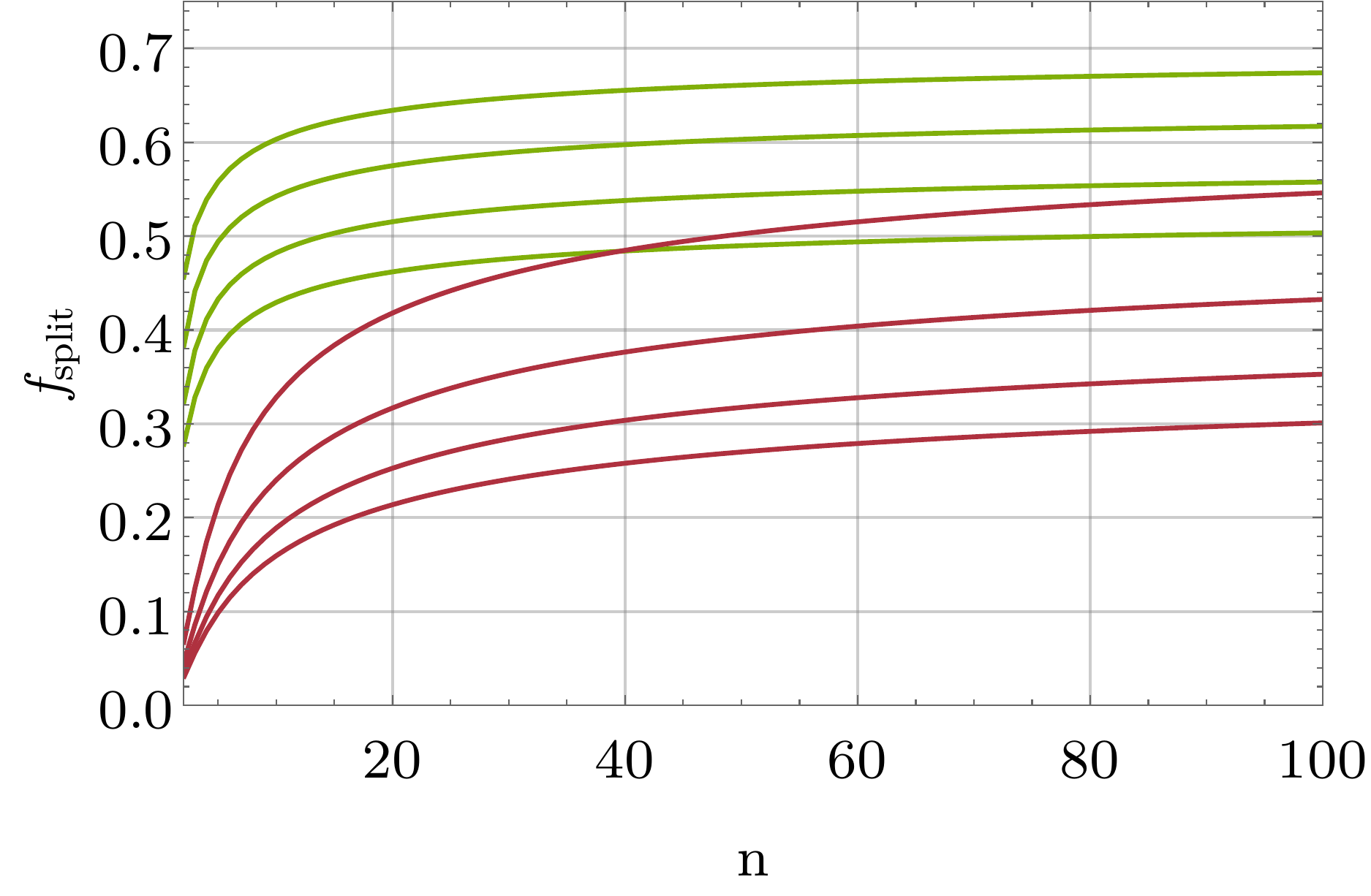}
    \hfill
    \includegraphics[width=.48\columnwidth]{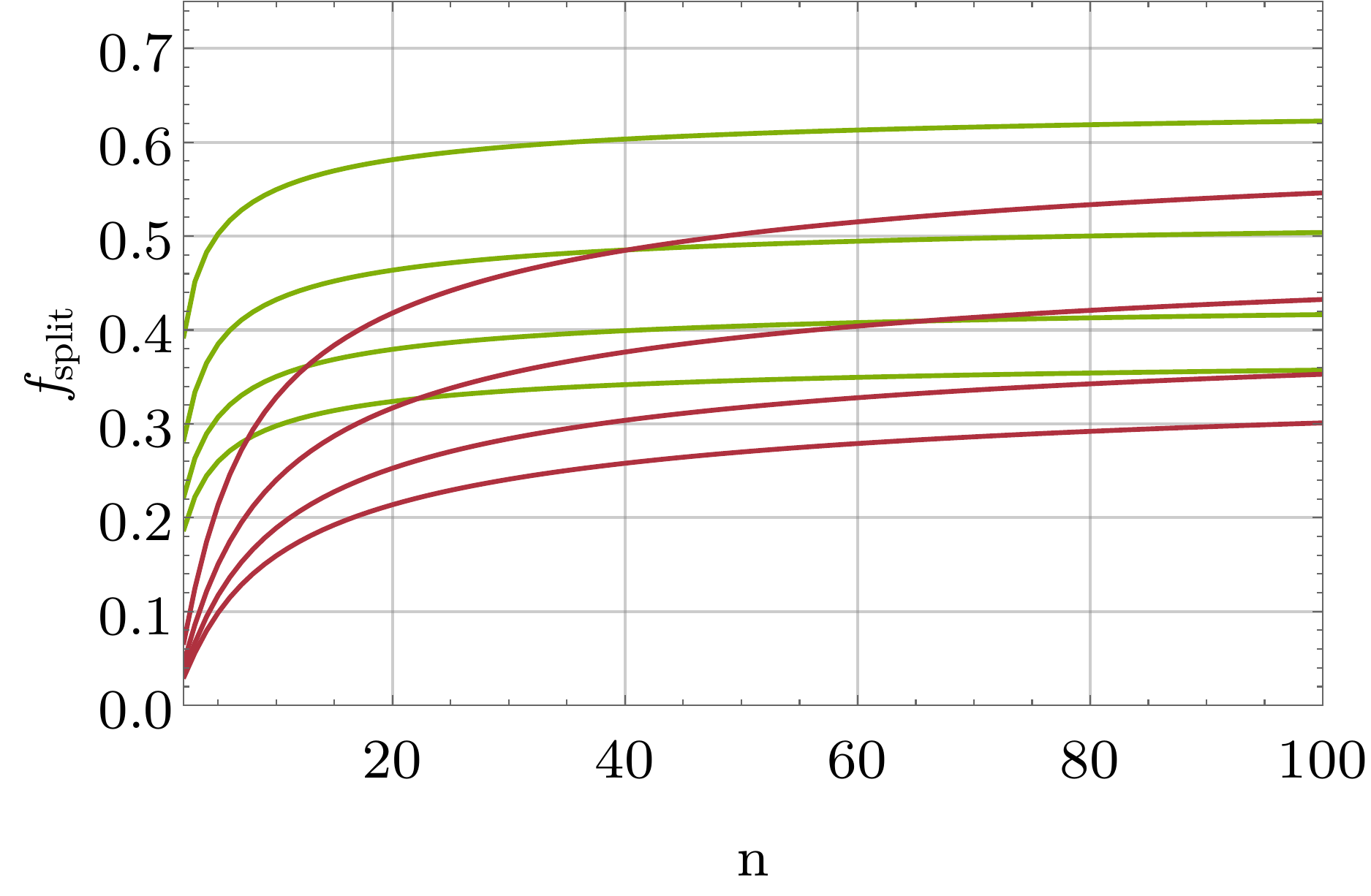}
    \caption{Minimized value of the splitting exponent $\fsplit$ as defined in \cref{eq:fsplit}.
    Plotted are the values for $R=6$ (left) and $R=24$ (right), as well as $C_0=1/4$ (green, upper family of lines) which implies exactly one extra round of amplitude amplification, and $C_0=1/100$ (red, lower family of lines) which implies ten extra rounds of amplification.
    The power law exponents chosen are $k\in\{1.5,2.0,2.5,3.0\}$ (bottom to top, respectively).}
    \label{fig:constant-post-amp-f}
\end{figure}

\newcommand\Nhyp{N_\mathrm{hyp}}
The next step in our analysis is to take the modified splitting exponent $\fsplit$ and count how many hypotheses $\Nhyp$ remain to be searched over; this is important because it is not clear a priori how many paths we can still search over, and if that quantity is low---or even tends towards zero---then we retained too few elements.
Our hope is of course that in contrast to beam search, where generally the beam's width, i.e.\ the number of hypotheses retained at any point in time, is capped at some possibly large but constant value, we have a growing number of hypotheses to search over.

In order to count this number of hypotheses given a cutoff probability $p_0$, we can evaluate $M(R,k,k,p_0,n)$ in the limit of the power law exponent $k\longrightarrow 0$, and finally multiply $h_R^n(k_1)$ in \cref{eq:integral-1} to make the integral \emph{count} instead of calculating a cumulative density.
We again choose a series of values for $R$, $k$ and $C_0$ and plot the results in \cref{fig:constant-post-amp-N}.
While the number of leftover hypotheses is indeed reduced drastically as compared to performing a full search over $R^n$ elements, it is still growing exponentially with $n$, which results in a significant number of hypotheses to search over, many more than possible in the classical setting.

\begin{figure}
    \centering
    \includegraphics[width=.48\columnwidth]{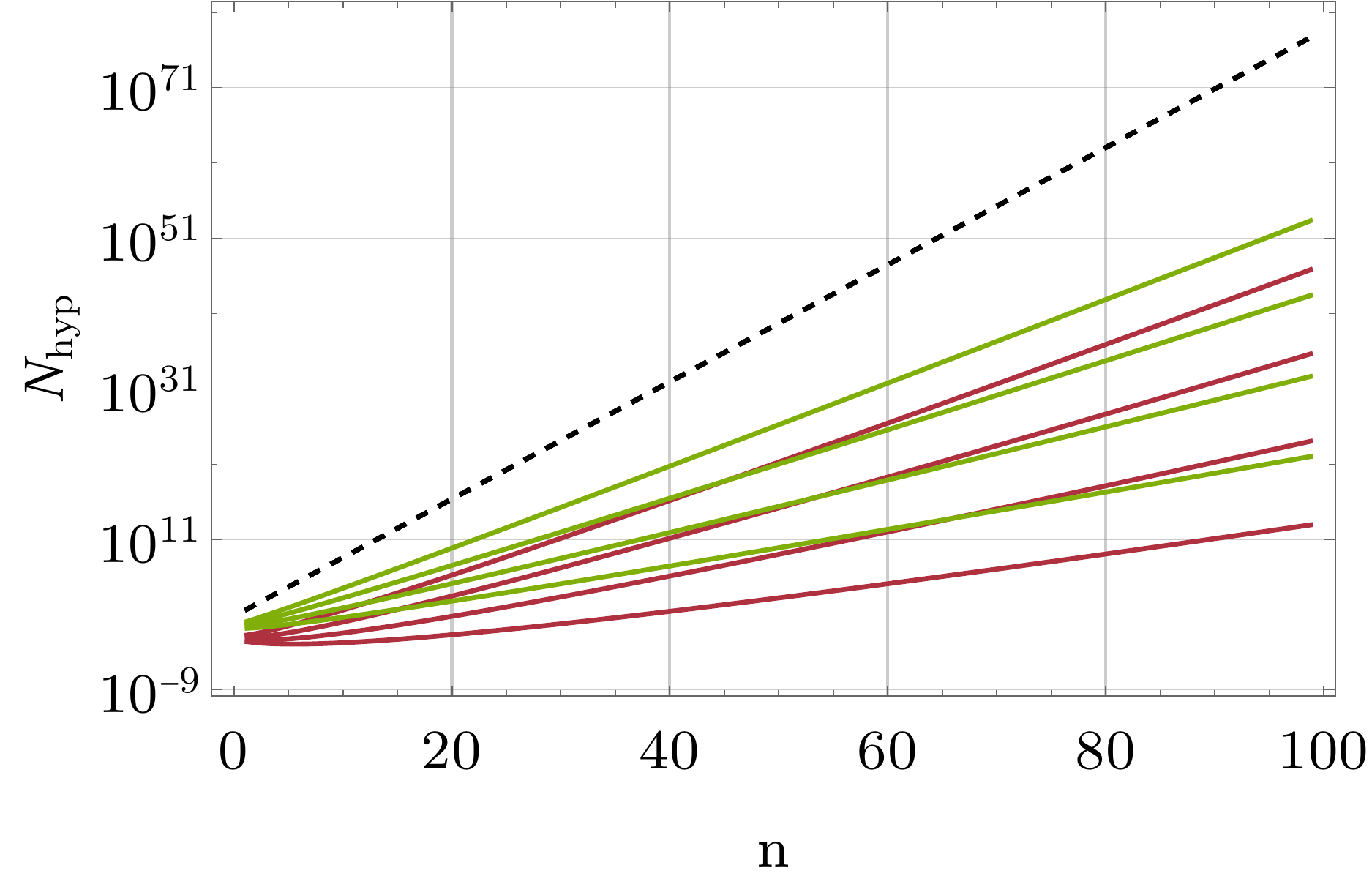}
    \hfill
    \includegraphics[width=.48\columnwidth]{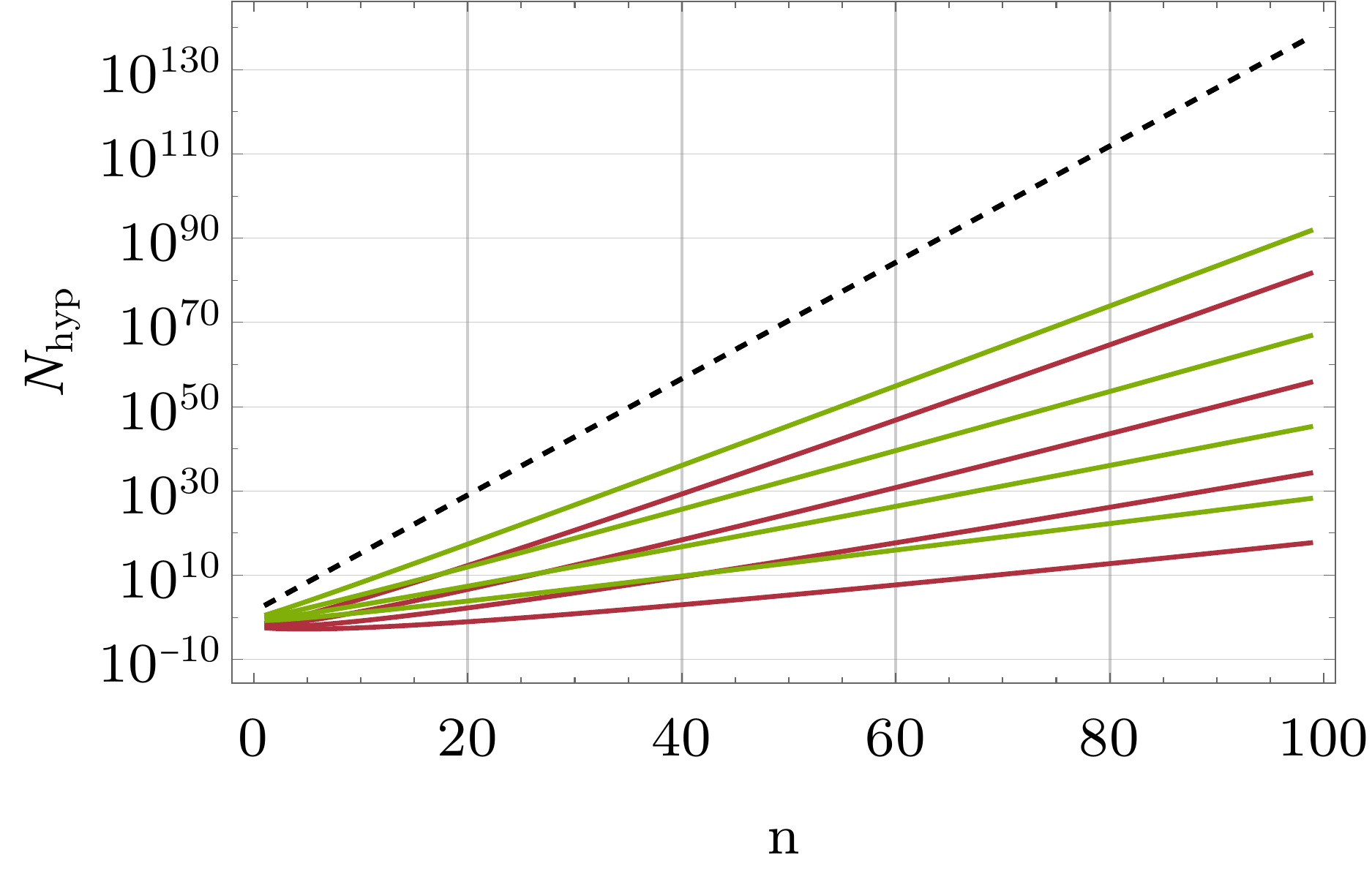}
    \caption{Number of hypotheses $\Nhyp$ left for a specific choice of splitting exponent $\fsplit$ to retain $C_0>1/4$ (green, one extra round of amplification) and $C_0>1/100$ (red, ten extra rounds of amplification) total probability weight for the hypotheses. The value of $\fsplit$ is obtained numerically from \cref{eq:fsplit} (cf.\ \cref{fig:constant-post-amp-f}).
    Plotted is the case $R=6$ (left) and $R=24$ (right), and $k\in\{1.5,2.0,2.5,3.0\}$ (from top to bottom in each plot and each color, respectively).
    The dashed line is the total number of possible hypotheses $R^n$ as reference.
    }
    \label{fig:constant-post-amp-N}
\end{figure}

As a last step, we want to analyse the modified runtime given the changed probability cutoff, which corresponds to evaluating the integral
$M(R,k,k/2,p_0,n)$ with the $p_0$ derived from the optimization \cref{eq:fsplit}.
The results are collected in \cref{fig:constant-post-amp-rt}.
As one can verify, the runtime does remain asymptotically exponential in the sequence length $n$; however the base of the exponential is reduced accordingly.

\begin{figure}
    \centering
    \includegraphics[width=.48\columnwidth]{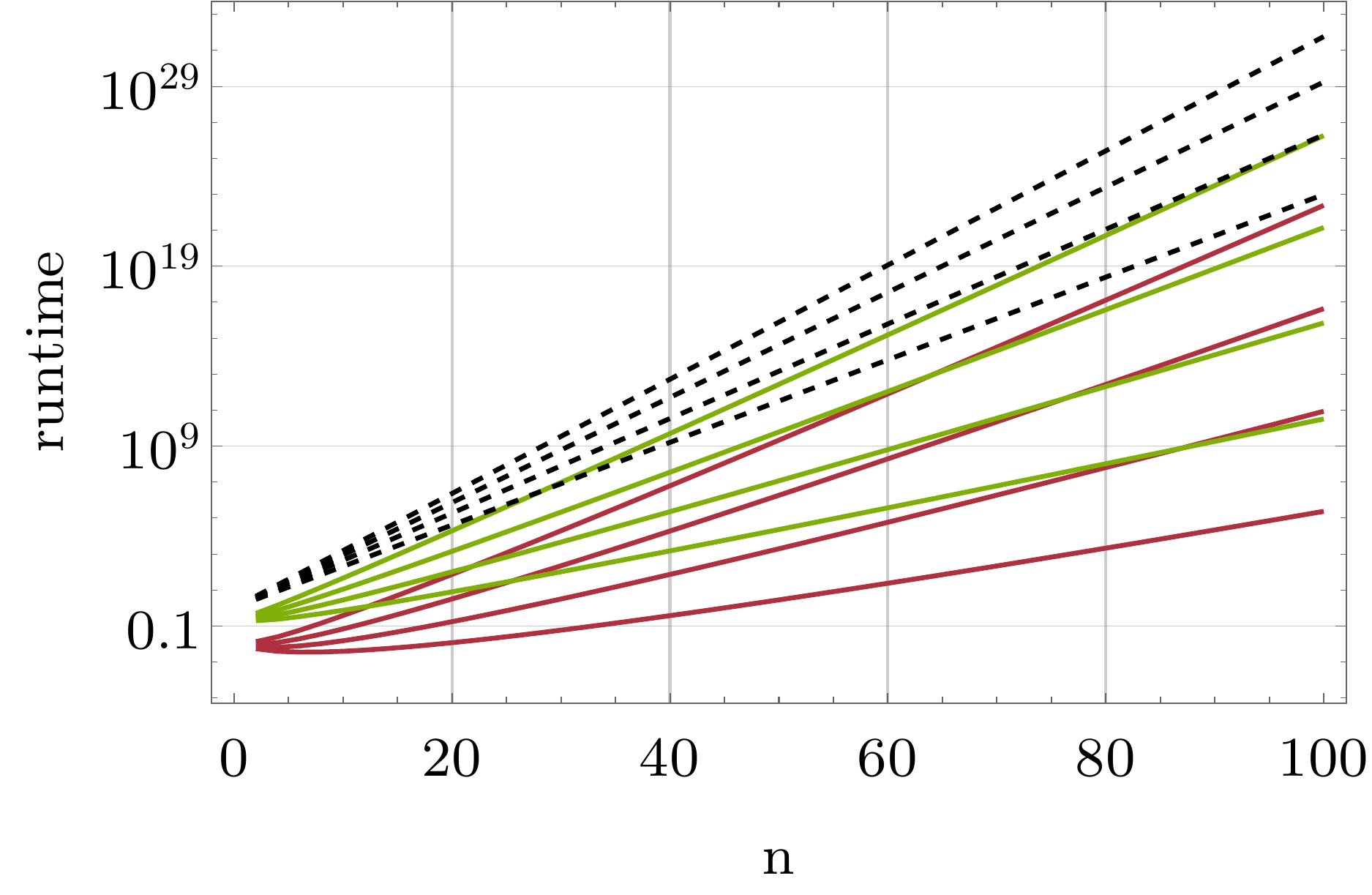}
    \hfill
    \includegraphics[width=.48\columnwidth]{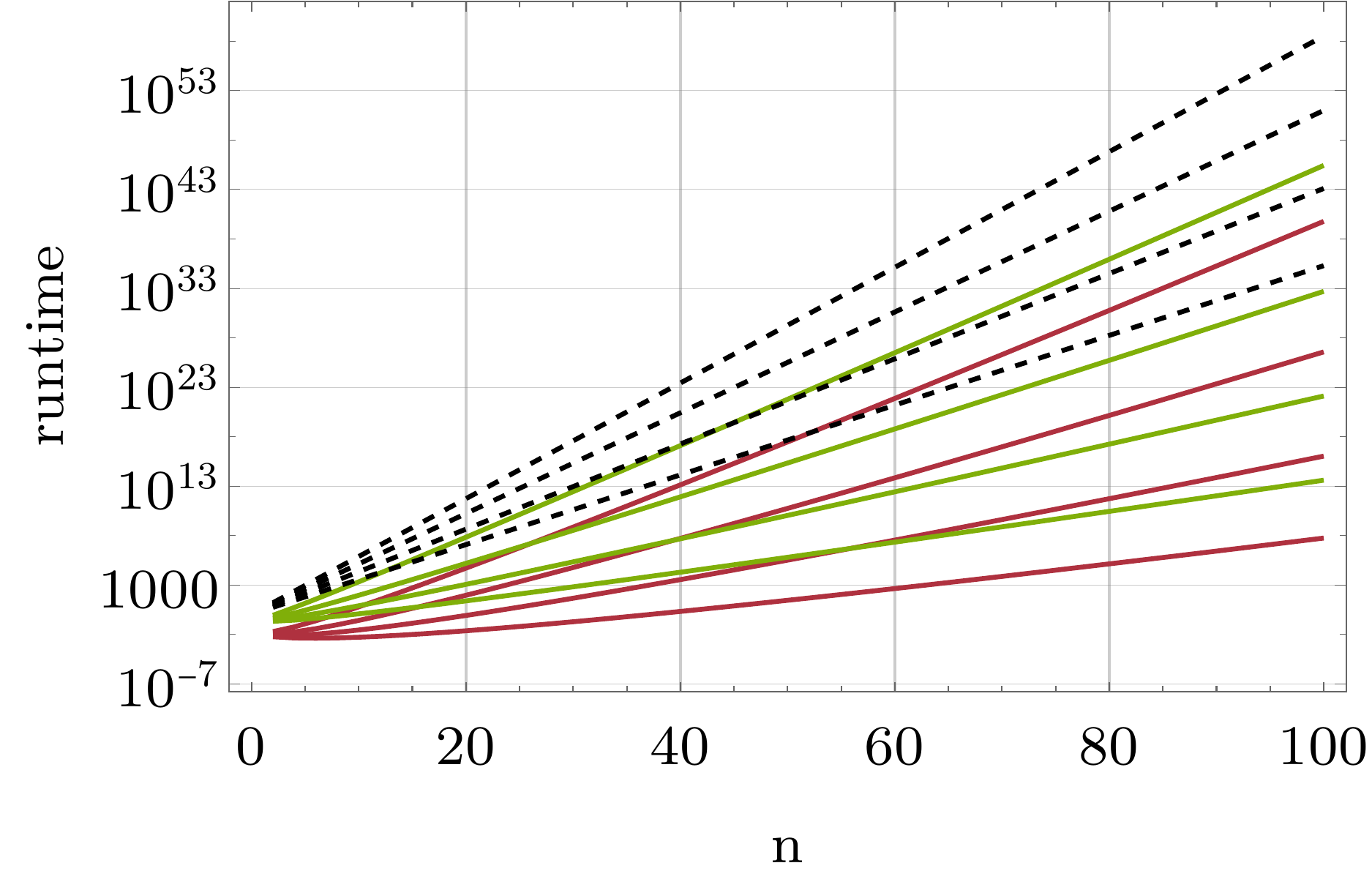}
    \caption{Runtime when post-amplifying to retain only a fraction $C_0\ge1/4$ of weight (green, one extra round of amplification) or $C_0\ge1/100$ (red, ten extra rounds of amplification) on the hypotheses. The value of $\fsplit$ is obtained numerically from \cref{eq:fsplit} (cf.\ \cref{fig:constant-post-amp-f}).
    Plotted is the case $R=6$ (left) and $R=24$ (right), and $k\in\{1.5,2.0,2.5,3.0\}$ (from top to bottom in each plot and for each color, respectively).
    The dashed line is the full search runtime $\mathrm{RT}_1(R,k,n)$ from \cref{lem:simple-rt} as reference.}
    \label{fig:constant-post-amp-rt}
\end{figure}

\subsection{Non-Constant Post-Amplification}\label{sec:non-const-amp}
The analysis of \cref{sec:constant-post-amp} can of course be repeated for a non-constant $\fsplit$; however, one has to be aware that these extra amplitude amplification rounds factor into the overall runtime.
For a retained fraction $g(n)$ of the total probability weight, the optimization thus reads
\begin{align}
    \text{minimize $p_0$ subject to\ } &M(R,k,k,p_0,n) \ge g(n) \label{eq:frac-beam}\\[2.5mm]
    \text{which retains\ } \lim_{k\rightarrow0} &M(R,k,k,p_0,n)\ \text{hypotheses,} \label{eq:N-beam}\\
    \text{and has runtime bound\ } &g(n)^{-1/2}M(R,k,k/2,p_0,n). \label{eq:rt-beam}
\end{align}

We take the power law exponent derived from Mozilla's DeepSpeech neural network, $k=3.03$ (cf.~Sec.~5.2, supplementary material), and derive runtime bounds for decoding its output with a parser under the assumption that, on average, we take $R=3$ branches in the parsing tree at every time step.
As discussed in \cref{sec:powerlaw-input}, the sampling distribution over three elements only yields a slightly lower exponent of $k=2.91$.

As an example we consider an input sequence of length 500; with the above parameters and a splitting exponential $\fsplit = n^{-1/2}$ (resp.\ $=n^{-3}$) we can search over $N_\mathrm{hyp}\approx 10^{60}$ (resp.\ $\approx 10^{18}$) hypotheses, with a runtime $\approx 10^{30}$ (resp.\ $\approx 10^9$).
Similarly, when capping the beam width at $N_\mathrm{hyp}\le 10^6$, we asymptotically require $\approx 10^3$ iterations of the beam search decoder (which includes the post-amplification rounds); for shorter sequences, a super-Grover speedup as present in full \textsc{QuantumSearchDecode} is achieved.

\begin{figure}
    \centering
    \includegraphics[width=.48\columnwidth]{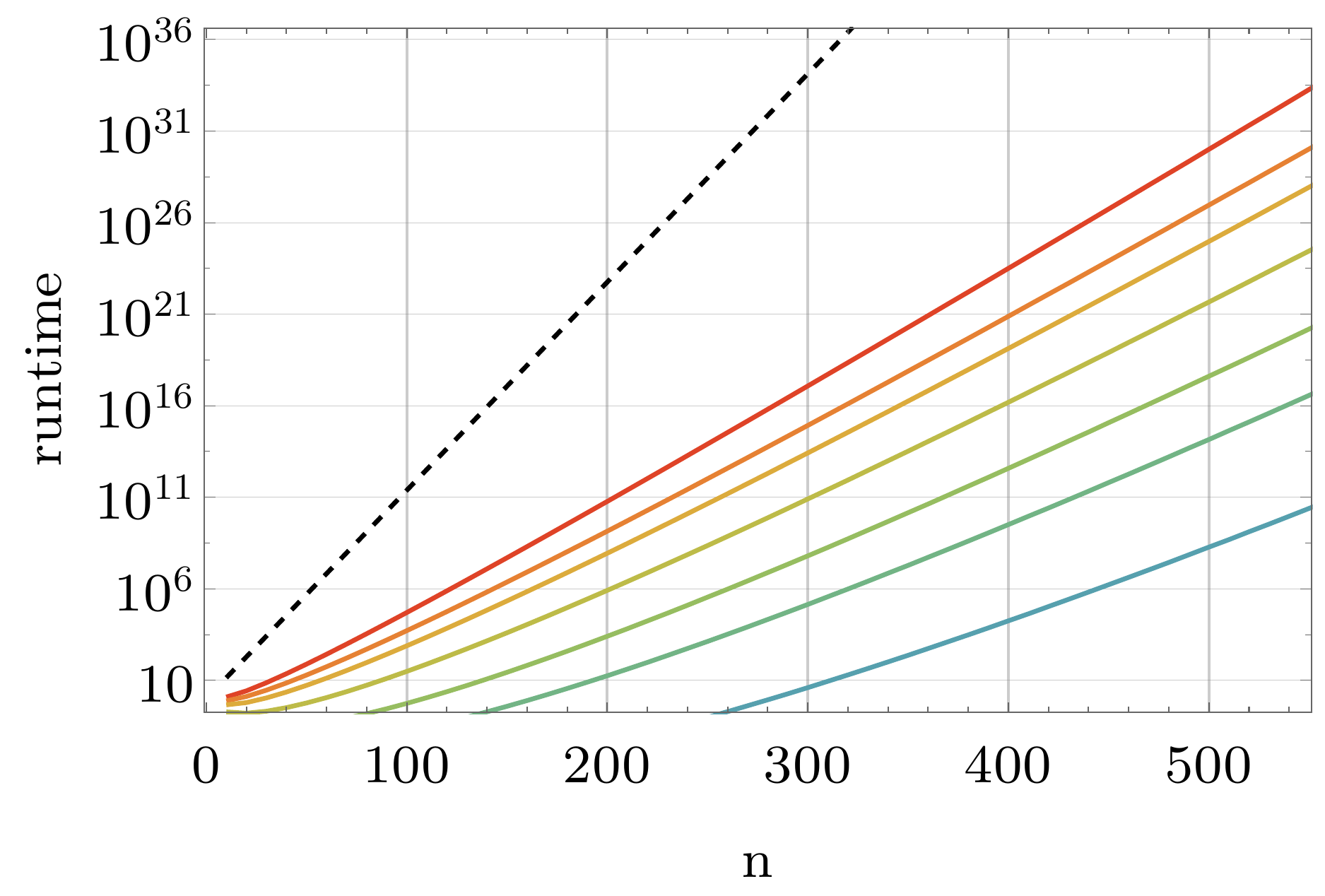}
    \hfill
    \includegraphics[width=.48\columnwidth]{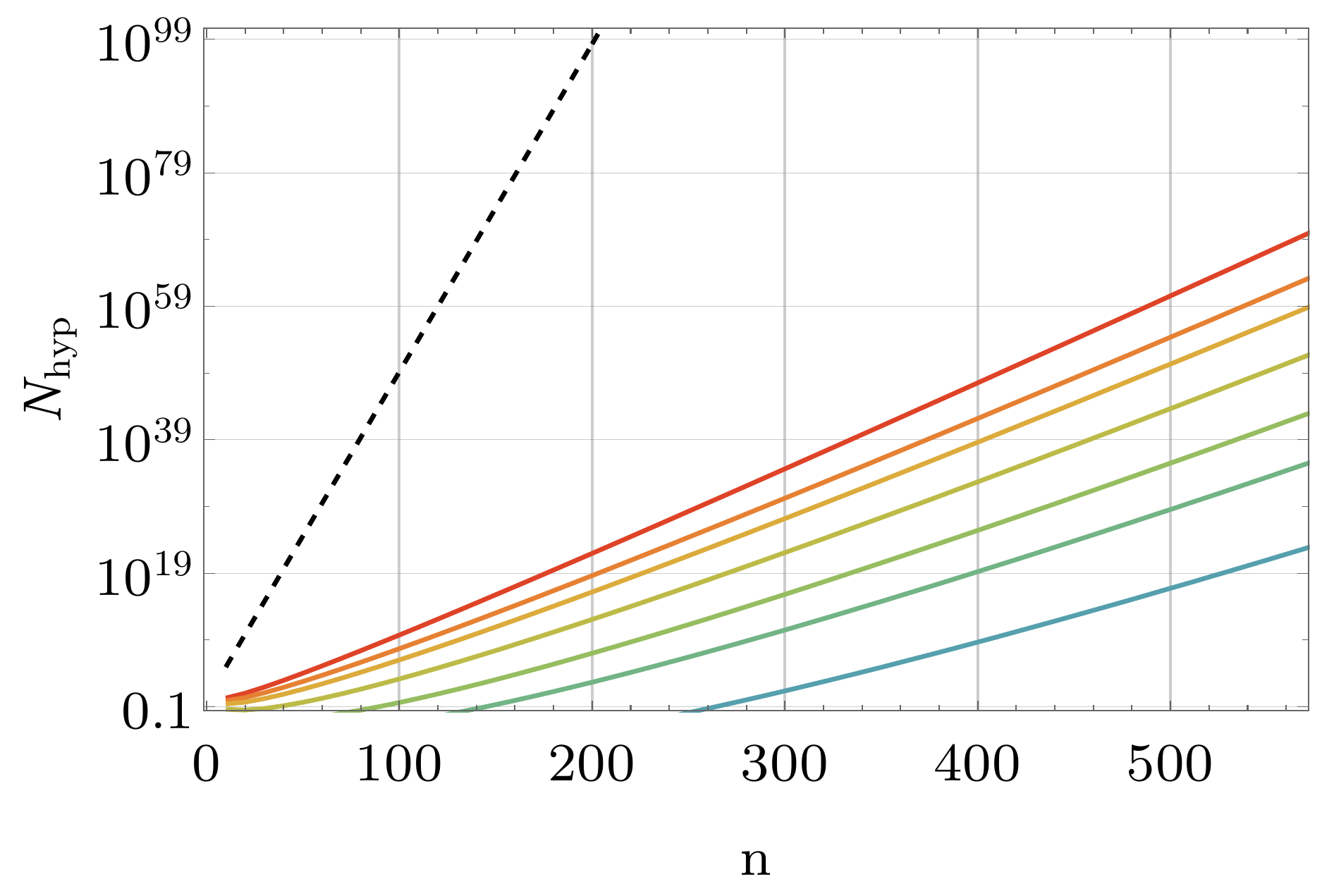}
    \caption{Number of iterations (\cref{eq:rt-beam}) and number of hypotheses (\cref{eq:N-beam}) of quantum beam search decoding the output of Mozilla's \emph{DeepSpeech} LSTM with a grammar, assuming an average branching ratio of $R=3$, a token power law distribution with exponent $k=2.91$, and post-amplification of the quantum search decoder with a retained fraction of hypotheses $C_0=C_0(n)\in\{n^{-1/2}, n^{-2/3}, n^{-1}, n^{-3/2}, n^{-2}, n^{-3}\}$  as defined in \cref{eq:frac-beam}, which is plotted in rainbow colors from red to blue, top to bottom. The dashed line is the full quantum search runtime and number of hypotheses from \cref{eq:rt}.}
    \label{fig:deepspeech}
\end{figure}

\section{Further Proof Details}
For Lemma 5, a more detailed proof is given as follows.
\begin{proof}[Lemma 5]
By Th.~1, \cite{Buhrman2001}, we have that any non-reversible computation requiring time $T$ and space $S$ can be simulated reversibly in time $T'=3^k2^{\BigO(T/2^k)}$ and space $S'=(1+\BigO(k))S$, for a $0\le k\le \log_2 T$ chosen arbitrarily.
Choose $k=\log_2 T$, then $S'=(1+\BigO(\log_2 T))S$, and $T'=\BigO(T^{\log_2 3})$.
Now translate this reversible probabilistic classical circuit into a quantum circuit---e.g.\ using the Solovay-Kitaev theorem \cite{Nielsen2010}, which incurs an at most logarithmic runtime overhead.
\end{proof}

\begin{figure}[t]
    \centering
    \includegraphics[width=0.7\columnwidth]{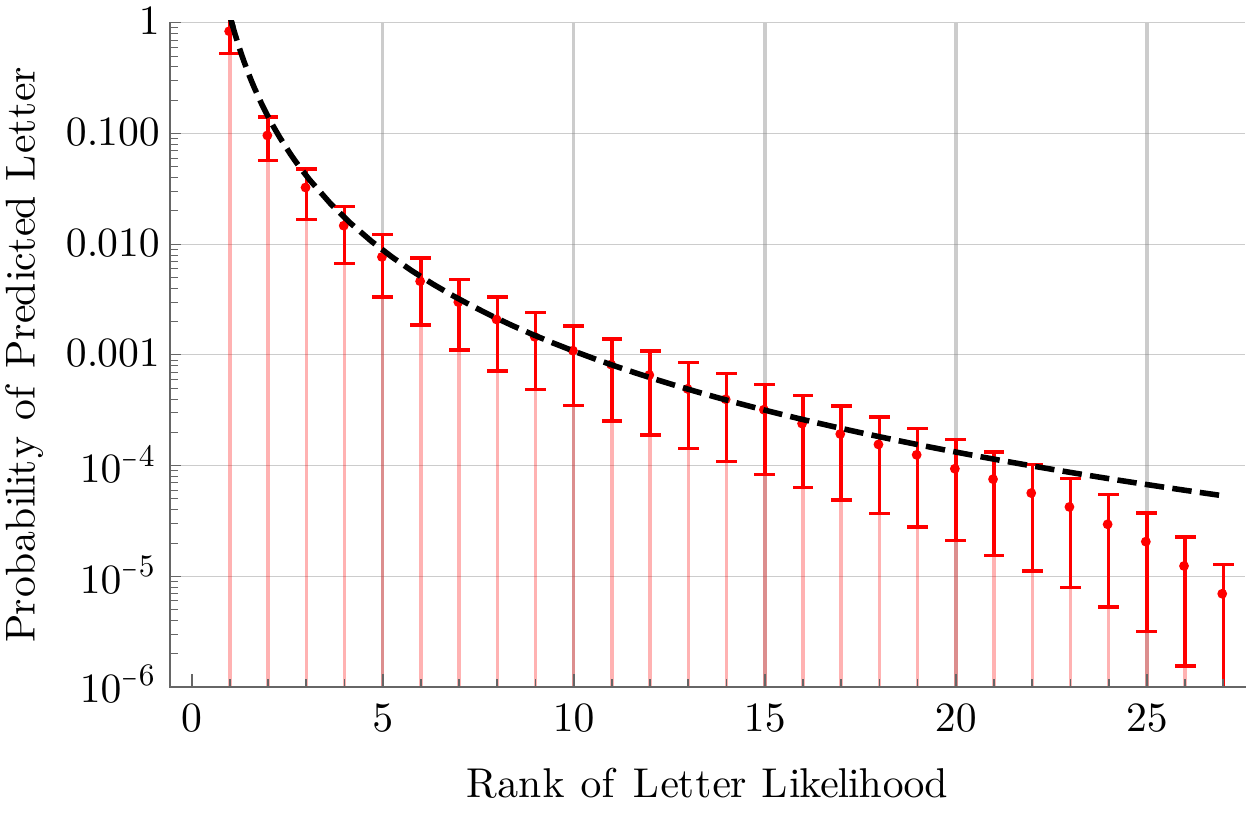}
    \caption{Log plot of the power law distribution of the output probabilities obtained from Mozilla's \emph{DeepSpeech} voice recognition LSTM on the Mozilla Common Voice verified test dataset for English \cite{Mozilla2019a}, which consists of 3995 audio samples of about ten seconds each of spoken test sentences.
    The dashed line is a fitted power law $a r^{-b}$ with parameters $a=1.2\pm0.1$ and $b=3.03\pm 0.03$.
We individually process each audio file, and capture the output after the final \texttt{Softmax} layer (\texttt{logits:0}), but before it is processed further by the greedy connectionist temporal classification (CTC beam search) implemented by \emph{DeepSpeech}.}
    \label{fig:lstpower-law}
\end{figure}

\section{Rank of Letter Likelihood for Mozilla's DeepSpeech}
DeepSpeech processes mel-frequency cepstral coefficients extracted from a sliding window of 25 miliseconds, with a stride of 20 miliseconds; for each such frame, the LSTM is invoked, and yields a distribution over the letters of the english alphabet ``a'' to ``z'', as well as a few special symbols, e.g.\ ``silence''. For the specific architecture of the LSTM we refer the reader to the original paper \cite{Hannun2014}.
Our hypothesis was that these letter probabilities follow a power-law distribution; our data (shown in \cref{fig:lstpower-law}) supports this claim.

We want to emphasize that the fact the letters a-z follow Zipf's law with respect to their occurence in English sentences (see e.g.\ \cite{Egghe2000,Piantadosi2014}) plays no role in attaining the speedup.
In addition to \cref{fig:lstpower-law}, we verified that when only collecting those output frames of \emph{DeepSpeech} where, say, ``t'' is the most likely prediction, the distribution over all letters---sorted by rank, i.e.\ sorted from most to least likely prediction---is already a power-law.
This is a feature of the output of the model, and not necessarily a property of the underlying data the model was trained on.
In our context this means that the Softmax output layer of the LSTM has to yield a power-law probability distribution.
How frequently a given letter is the most likely prediction---which is itself known to be a power-law, as mentioned---is not important.

\end{document}